\def\mdseries@tt{m}             
  \providecommand\BibTeX{{%
    \normalfont B\kern-0.5em{\scshape i\kern-0.25em b}\kern-0.8em\TeX}}}
\tikzset{fontscale/.style = {font=\relsize{#1}}}
\tikzset{mycolor/.style = {line width=1bp,color=#1}}%
\tikzset{myfillcolor/.style = {draw,fill=#1}}%
\newtheorem{theorem}{Theorem}[section]
\newtheorem{cor}[theorem]{Corollary}
\newtheorem{prop}[theorem]{Proposition}
\newtheorem{lemma}[theorem]{Lemma}
\newtheorem{defn}[theorem]{Definition}
\newtheorem{ex}[theorem]{Example}
\NewDocumentCommand{\highlight}{O{blue!40} m m}{%
\draw[mycolor=#1] (#2.north west)rectangle (#3.south east);
}
\NewDocumentCommand{\fhighlight}{O{blue!40} m m}{%
\draw[myfillcolor=#1] (#2.north west)rectangle (#3.south east);
}
\newcommand{\cE}{{\mathcal{E}}}
\newcommand{\cF}{{\mathcal{F}}}
\newcommand{\N}{{\mathbb{N}}}
\newcommand{\Z}{{\mathbb{Z}}}
\newcommand{\Q}{{\mathbb{Q}}}
\newcommand{\bI}{{\mathbb{I}}}
\newcommand{\hp}{{\rm{hp}}}
\newcommand{\hc}{{\rm{hc}}}
\newcommand{\hmm}{{\rm{hm}}}
\newcommand{\si}{{\rm{si}}}
\newcommand{\scc}{{\rm{sc}}}
\newcommand{\sv}{{\rm{sv}}}
\newcommand{\spa}{{\rm{span}}}
\newcommand{\ind}{{\rm{ind}}}
\newcommand{\dg}{{\text{deg}}}
\newcommand{\Li}{{\rm{Li}}}
\begin{document}

\title{An Additive Decomposition in S-Primitive Towers}



\author{Hao Du$^1$, \quad Jing Guo$^2$, \quad Ziming Li$^2$, \quad  Elaine Wong$^1$}
\affiliation{%
 \institution{$^1$Johann Radon Institute (RICAM), Austrian Academy of Sciences, Altenberger Stra\ss e 69, 4040, Linz, Austria}}
\affiliation{%
\institution{$^2$Key Laboratory  of Mathematics and Mechanization, AMSS, Chinese Academy of Sciences }}
\affiliation{%
\institution{School of Mathematical Sciences, University of Chinese Academy of Sciences, Beijing, 100190, China}}
\affiliation{%
\institution{hao.du@ricam.oeaw.ac.at, \,\,  JingG@amss.ac.cn, \,\, zmli@mmrc.iss.ac.cn, \,\, elaine.wong@ricam.oeaw.ac.at}}
%
%

\renewcommand{\shortauthors}{Du, Guo, Li, Wong}

\begin{abstract}
We consider the additive decomposition problem in primitive towers and present an algorithm to decompose a function in an S-primitive tower as a sum of a derivative in the tower and a remainder which is minimal in some sense. Special instances of S-primitive towers include differential fields generated by finitely many logarithmic functions and logarithmic integrals. A function in an S-primitive tower is integrable in the tower if and only if the remainder is equal to zero. The additive decomposition is achieved by viewing our towers not as a traditional chain of extension fields, but rather as a direct sum of certain subrings. Furthermore, we can determine whether or not a function in an S-primitive tower has an elementary integral without solving any differential equations. We also show that a kind of S-primitive towers, known as logarithmic towers, can be embedded into a particular extension where we can obtain a finer remainder.
\end{abstract}

\begin{CCSXML}
<ccs2012>
 <concept>
  <concept_id>10010520.10010553.10010562</concept_id>
  <concept_desc>Computer systems organization~Embedded systems</concept_desc>
  <concept_significance>500</concept_significance>
 </concept>
 <concept>
  <concept_id>10010520.10010575.10010755</concept_id>
  <concept_desc>Computer systems organization~Redundancy</concept_desc>
  <concept_significance>300</concept_significance>
 </concept>
 <concept>
  <concept_id>10010520.10010553.10010554</concept_id>
  <concept_desc>Computer systems organization~Robotics</concept_desc>
  <concept_significance>100</concept_significance>
 </concept>
 <concept>
  <concept_id>10003033.10003083.10003095</concept_id>
  <concept_desc>Networks~Network reliability</concept_desc>
  <concept_significance>100</concept_significance>
 </concept>
</ccs2012>
\end{CCSXML}


\keywords{Additive decomposition, Primitive tower, Logarithmic tower, Symbolic integration, Elementary integrability}


\maketitle

\section{Introduction}
\label{SECT:introduction}

We consider the integrability problem in some class $\cF$ of functions in $x$, where $\cF$ is assumed to be closed under addition and  the usual derivation $'=\frac{d}{dx}$.
For $f\in\cF$, we ask if the indefinite integral of $f$ belongs to $\cF$. Let $\cF^\prime:=\{g' \mid g\in\cF\}$. The problem can therefore be stated as follows:
\begin{equation}\label{EQ:integrability}
\text{Given } f\in\cF,  \text{decide if } f\in\cF^\prime.
\end{equation}

\vspace{1cm}

We can see that a positive answer to \eqref{EQ:integrability} tells us that we can compute $g \in \cF$ such that $f = g^\prime.$ If \eqref{EQ:integrability} produces a negative answer, then we say $f$ is \emph{not integrable} in $\cF$.

In the latter case, we would still like to be able to say something about the given function. Is there any information to help us understand how far off we are from being successful? The answer lies in the additive decomposition problem:
$$\text{Compute } g, r \in \cF \text{ such that } f=g^\prime+r,$$
where
\begin{itemize}
\item[(i)] $r$ is minimal in some sense;
\item[(ii)] $f \in \cF^\prime$ if and only if $r = 0$.
\end{itemize}
We call such an $r$ a \emph{remainder} of $f$ in $\cF$ and write $$f\equiv r \mod \cF^\prime.$$ So, it is clear that an algorithm for solving the problem of additive decomposition also provides a solution to the integrability problem. 
Elements in $\cF^\prime$ have a special form, indicating that most functions have nonzero remainders. Remainders help us find \lq\lq closed form\rq\rq\ expressions for integrals of elements in $\cF$, in the sense that the integrals belong to some extensions over $\cF$. They also play an important role in reduction-based methods for creative telescoping.

The first additive decomposition for the class $\cF=\mathbb{C}(x)$ is due to Ostrogradsky~\cite{Ostrogradsky1845} and Hermite~\cite{Hermite1872}. Given a rational function $f \in \cF$, they were able to compute a remainder $r \in \cF$ of $f$ such that $r$ is proper and has a squarefree denominator, and $r$ is minimal in the sense that if $f \equiv \tilde{r} \mod \cF'$ for some $\tilde{r} \in \cF$, then the denominator of $r$ divides that of $\tilde{r}$.

There has been rapid development of additive decompositions in both symbolic integration and summation in recent years~\cite{Abra1995, BCCLX2013, BCLS2018, CHKL2015, CKK2016, CHKK2018, DHL2018, Raab2012, Hoeven2020}. Most of the articles were motivated by computing telescopers based on reduction~\cite{BCCL2013}. In the cited literature, some classes of functions that were studied include hyperexponential~\cite{BCCLX2013}, algebraic~\cite{CKK2016}, Fuchsian D-finite~\cite{CHKK2018}, and D-finite~\cite{Hoeven2020}. Additive decomposition problems in these classes have been fully solved. We observe that the space of D-finite functions is not closed under composition or taking reciprocals. For example, $\log x$ is D-finite, but $\log(\log(x))$ and $1/\log(x)$ are not. In this paper, we consider a class of functions that is closed under these two operations.


Singer et al. in 1985 and then Raab in 2012 gave some decision procedures for finding elementary integrals in some Liouvillian extensions \cite{Raab2012, SSC1985} and in the extensions which contain some nonlinear generators \cite{Raab2012}. They recursively solve Risch differential equations until one of them has no solution, or else the integral can be found. In the implementation of Raab's algorithm, the former case outputs an integrable part and collects all nonzero terms that prevent the differential equations from having a solution.
Recently, Chen, Du and Li \cite{CDL2018} were able to construct  remainders in some primitive extensions (they termed them ``straight towers'' and ``flat towers'') without solving any differential equations.

In this article, we expand their work \cite{CDL2018} to ``S-primitive towers'', which can be neither straight nor flat.
 Instances for S-primitive towers include differential field extensions generated by finitely many logarithmic functions and logarithmic integrals. Moreover, we show that a logarithmic tower can be embedded in a \\well-generated logarithmic tower with the aid of logarithmic product and quotient rules. We can compute ``finer'' remainders in such an extension.

\begin{figure}[ht]
\centering
\resizebox{8cm}{!}{
\begin{tikzpicture}[thick,scale=1, every node/.style={scale=1.1}]
\tikzmath{\xposleft=-6; \xposright=6; \xpos=0; \ypos=0; \sizefont=4;}

\draw (\xposleft-0.3,\ypos) ellipse (5.3 and 3.5);
\node at (\xposleft,\ypos+2.4)[fontscale=\sizefont] {Primitive Towers};
\node at (\xposleft,\ypos+1.7)[fontscale=\sizefont] {$K_0(t_1,\ldots,t_n)$};
\draw[fill=gray!60] (\xposleft,\ypos-1) ellipse (4.2 and 2);
\node at (\xposleft,\ypos+0.1)[fontscale=\sizefont] {S-Primitive Towers};
\draw[] (\xposleft+0.9,\ypos-1.4) ellipse (2.4 and 1.1);
\node at (\xposleft+0.1,\ypos-0.9) [fontscale=\sizefont]{Log};
\draw[] (\xposleft-2.4,\ypos-1.6) ellipse (2.2 and 0.7);
\node at (\xposleft-2.7,\ypos-1.6)[fontscale=\sizefont] {Straight};
\draw[] (\xposleft+1.6,\ypos-1.6) ellipse (1.4 and 0.5);
\node at (\xposleft+1.8,\ypos-1.6)[fontscale=\sizefont] {Flat};

\draw[fill=gray!60] (\xposright,\ypos+0.5) ellipse (3.3 and 1.5);
\node at (\xposright,\ypos+1)[fontscale=\sizefont] {Well-Generated};
\node at (\xposright,\ypos+0.4)[fontscale=\sizefont] {Log Towers};
\node at (\xposright,\ypos-0.2)[fontscale=\sizefont] {$K_0(u_1,\ldots,u_w)$};

\draw[->, ultra thick] (\xpos-5.2,\ypos-0.9) to [out=20,in=170] (\xpos+4,\ypos+0.5);
\node at (\xpos+0.7,\ypos+2) [fontscale=\sizefont] {Embedding};
\node at (\xpos+0.7, \ypos+1.4) [fontscale=\sizefont] {Theorem \ref{TH:wg}};
\end{tikzpicture}
}

\caption{The gray ellipses on the left indicate the classes of functions for which we can construct a remainder. The embedding gives us a field extension ($n\leq w$) where a ``finer'' remainder can be obtained.}
\label{FIG:mainresult}

\Description[short]{long}
\end{figure}
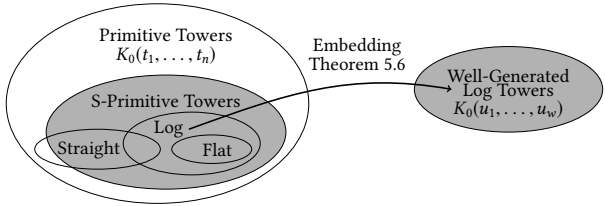

The organization of this article is as follows. In Sections~\ref{SECT:preliminaries} and \ref{SECT:md}, we give some relevant definitions associated to primitive towers, and then present a different way to view the towers. In Section~\ref{SECT:ad}, we give an algorithm for additive decompositions in S-primitive towers, and present a criterion for elementary integrability for the functions in such a field. In Section~\ref{SECT:wglt}, we discuss how to find a finer additive decomposition in well-generated logarithmic towers. Concluding remarks are given in Section~\ref{SECT:conc}.

\section{Preliminaries}
\label{SECT:preliminaries}

Let $K$ be a field of characteristic zero and $K(t)$ be the field of rational functions in $t$ over $K$. An element of $K(t)$ is said to be {\em $t$-proper} if the degree of its denominator in $t$ is higher than that of its numerator. In particular, zero is $t$-proper. For each $f \in K(t)$, there exists a unique $t$-proper element $g \in K(t)$ and a unique polynomial $p \in K[t]$ such that \begin{equation} \label{EQ:decomposition} f = g + p. \end{equation}

Let $^\prime$ be a derivation on $K$. The pair $(K, \, ^\prime)$ is called a {\em differential field.} An element $c$ of $K$ is called a {\em constant} if $c^\prime=0$. The set of constants in $K$, denoted by $C_K$, is a subfield of~$K$. Set $K^\prime := \{ f^\prime \mid f \in K\}$,  which is a linear subspace over~$C_K$. We call $K^\prime$ the {\em integrable subspace} of $K$.

Let $(E, \, ^\delta)$ be a differential field containing $K$. We say that $E$ is a {\em differential field extension} of~$K$ if $^\delta|_{K}=\,^\prime$. The derivation $^\delta$ is also denoted by $^\prime$ when there is no confusion.
Let $(F, \, ^\delta)$ be another differential field. An algebraic homomorphism $\phi$ from~$K$ to~$F$ is said to be {\em differential} if $\phi(f^\prime)=\phi(f)^\delta$ for all $f \in K$.

Let $(K,\,^\prime)$ be a differential field and $f \in K$. We call $f$ a {\em logarithmic derivative} in $K$ if $f = g^\prime / g$ for some $g \in K$. Let $K(t)$ be a differential extension of $K$ where $t$ is transcendental over $K$ and $t'\in K[t]$.
A polynomial~$p$ in $K[t]$ is said to be {\em $t$-normal} if $\gcd(p, p^\prime)=1$. For $f \in K(t)$, we say that $f$ is {\em $t$-simple} if it is $t$-proper and has a $t$-normal denominator.

We next define primitive and logarithmic generators, which are based on Definitions 5.1.1 and 5.1.2 in~\cite{BronsteinBook}\footnotemark[1], respectively.

\begin{defn} \label{DEF:primitive}
Let $(K, \, ^\prime)$ be a differential field, and $E$ be a differential field extension of $K$. An element $t$ of $E$ is said to be {\em primitive} over $K$ if $t^\prime \in K$. A primitive element $t$ is called a {\em primitive generator} over~$K$ if it is transcendental over~$K$ and $C_{K(t)}=C_K$. 
Furthermore, a primitive generator $t$ is called a {\em logarithmic generator} over $K$ if  $t'$ is a $C$-linear combination of logarithmic derivatives in $K$.
\end{defn}

An immediate consequence of Theorem 5.1.1 in~\cite{BronsteinBook}\footnotemark[1]\ is:

\begin{prop} \label{PROP:primitivemonomial}
Let $t$ be primitive over~$K$. Then $t$ is a primitive generator over $K$ if and only if $t^\prime \notin K^\prime$. Assume that $t$ is a primitive generator over $K$. Then $p \in K[t]$ is $t$-normal if and only if $p$ is squarefree.
\end{prop}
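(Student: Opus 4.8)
The plan is to establish the two assertions separately. For the equivalence ``$t$ is a primitive generator over $K$ $\iff t' \notin K'$'', the implication $(\Rightarrow)$ is immediate: if $t' = g'$ for some $g \in K$, then $(t-g)' = 0$, so $t - g \in C_{K(t)} = C_K \subseteq K$, forcing $t = (t-g) + g \in K$ and contradicting the transcendence of $t$ demanded by Definition~\ref{DEF:primitive}. For $(\Leftarrow)$ I would invoke Theorem~5.1.1 in~\cite{BronsteinBook}: from $t' \in K \setminus K'$ one obtains both that $t$ is transcendental over $K$ and that $C_{K(t)} = C_K$, i.e.\ that $t$ is a primitive generator. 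Should a self-contained argument be wanted, this is the classical computation: if $t$ were algebraic with monic minimal polynomial over $K$, or if some $P/Q \in K(t)$ in lowest terms were a constant outside $C_K$, then differentiating and comparing the top two coefficients would exhibit $t'$ as an element of $K'$.

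For the second assertion, fix a primitive generator $t$ and put $a := t' \in K$. The basic computation $p' = \sum_i p_i' t^i + a \sum_i i\, p_i t^{i-1}$ shows, for $p \in K[t]$, that $p' \in K[t]$ with $\deg_t p' \le \deg_t p$, and moreover $\deg_t p' < \deg_t p$ whenever $p$ is monic, since the $t^{\deg_t p}$-coefficient of $p'$ is the derivative of the (constant) leading coefficient of $p$. From this I would extract the key claim: \emph{no monic irreducible $q \in K[t]$ of positive degree divides $q'$}. Indeed, $q \mid q'$ would force $q' = 0$ by the degree bound, and then $q$ would be a constant of $K(t)$ of positive $t$-degree, contradicting $C_{K(t)} = C_K \subseteq K$ (equivalently, $q' = 0$ with $q$ monic of positive degree forces $a \in K'$, contrary to the first part).

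Granting the key claim, ``$t$-normal $\iff$ squarefree'' follows by routine divisibility arguments. For $(\Rightarrow)$, which uses no hypothesis on $t$: if $p$ is not squarefree, write $p = q^2 h$ with $q$ irreducible; then $p' = q(2q'h + q h')$, so $q \mid \gcd(p,p')$ and $p$ is not $t$-normal. For $(\Leftarrow)$: write a squarefree $p$ as $c \prod_i q_i$ with the $q_i$ distinct monic irreducibles and $c \in K \setminus \{0\}$; from $p' = c' \prod_i q_i + c \sum_j q_j' \prod_{i \ne j} q_i$ one sees that any irreducible factor $q_k$ of $p$ divides every summand except $c\, q_k' \prod_{i \ne k} q_i$, so $q_k \mid p'$ would give $q_k \mid q_k'$ (as $q_k$ is coprime to $c$ and to each $q_i$ with $i \ne k$), contradicting the key claim; hence $\gcd(p, p') = 1$.

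The step I expect to be the main obstacle is the converse half of the first equivalence — extracting both transcendence of $t$ and $C_{K(t)} = C_K$ from the lone hypothesis $t' \notin K'$ — which is precisely the content of the cited structure theorem. Everything afterward is bookkeeping; the only feature of ``primitive generator'' used beyond transcendence (which makes $K[t]$ an honest polynomial ring with well-behaved degrees) is that positive-degree polynomials are never constants.
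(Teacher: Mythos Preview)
Your proposal is correct. The paper itself does not supply a proof: it simply declares the proposition an ``immediate consequence of Theorem~5.1.1 in~\cite{BronsteinBook}'', which packages both assertions (the first explicitly, the second via the clause ``$S=k$'' meaning that the only special polynomials are the elements of $K$). Your treatment of the first equivalence matches this exactly---you cite the same theorem for the nontrivial direction---while for the second assertion you unwind the citation into a direct argument via the key claim that a monic irreducible $q\in K[t]$ of positive degree cannot divide $q'$. That is precisely the content of ``$S=k$'' in Bronstein's terminology, so the two routes coincide; yours is just more self-contained.
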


For the rest of the section, assume that $(K, \, ^\prime)$ is a differential field, and that $t$ is a primitive generator over $K$. By Theorem 5.3.1 in~\cite{BronsteinBook}\footnotemark[1] and Lemma 2.1 in~\cite{CDL2018}\footnotemark[1], for each $f \in K(t)$, there exists a unique $t$-simple element $h$ such that
\begin{equation} \label{EQ:hermite}
f \equiv h \mod \big( K(t)^\prime + K[t]\, \big).
\end{equation}
We call $h$ the {\em Hermitian part} of $f$ with respect to $t$, and denote it by $\hp_t(f)$. It is easy to check that $\text{hp}_t$ is a $C_K$-linear map on $K(t)$. Because of the uniqueness of Hermitian parts and Lemma 2.1 in \cite{CDL2018}\footnotemark[1], we have the following lemma.

\begin{lemma} \label{LM:int}
Let $f, g \in K(t)$. Then
\begin{itemize}
\item[(i)] $f \in K(t)^\prime + K[t] \Longrightarrow \hp_t(f)=0$,
\item[(ii)] $f$ is $t$-simple $\Longrightarrow  f = \hp_t(f)$, and
\item[(iii)] $f \equiv g \mod ( K(t)^\prime + K[t]) \Longrightarrow \hp_t(f)= \hp_t(g)$.
\end{itemize}
\end{lemma}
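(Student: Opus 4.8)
The statement to be proved is Lemma~\ref{LM:int}, which records three easy consequences of the existence-and-uniqueness of the Hermitian part asserted just above it (namely, that for each $f \in K(t)$ there is a unique $t$-simple $h$ with $f \equiv h \bmod (K(t)' + K[t])$, and that $\hp_t$ is $C_K$-linear).

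**Plan.** I would prove the three items by unwinding the definition of $\hp_t$ together with its defining congruence. For (i): if $f \in K(t)' + K[t]$, then $f \equiv 0 \bmod (K(t)' + K[t])$, and since $0$ is $t$-simple (zero is $t$-proper and, vacuously, has a $t$-normal denominator), the uniqueness clause in the definition of the Hermitian part forces $\hp_t(f) = 0$. For (ii): if $f$ is itself $t$-simple, then $f$ is a $t$-simple element congruent to $f$ modulo $K(t)' + K[t]$ (via the trivial congruence $f \equiv f$), so again uniqueness gives $\hp_t(f) = f$. For (iii): suppose $f \equiv g \bmod (K(t)' + K[t])$. By definition $g \equiv \hp_t(g) \bmod (K(t)' + K[t])$, and transitivity of congruence modulo the subspace $K(t)' + K[t]$ gives $f \equiv \hp_t(g) \bmod (K(t)' + K[t])$. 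Since $\hp_t(g)$ is $t$-simple, uniqueness of the Hermitian part of $f$ yields $\hp_t(f) = \hp_t(g)$.

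**The one point needing care.** The only non-formal ingredient is that $K(t)' + K[t]$ really is a $C_K$-subspace of $K(t)$, so that ``$\equiv \bmod (K(t)'+K[t])$'' is an equivalence relation and the above transitivity steps are legitimate; this is immediate since $K(t)'$ is a $C_K$-subspace (as noted in the preliminaries, the image of a derivation is a constant-linear subspace) and $K[t]$ is even a $K$-subspace. One should also note explicitly that $0$ is $t$-simple, which is exactly why item (i) drops out of uniqueness; the excerpt already remarks that zero is $t$-proper, and it has denominator $1$, which is $t$-normal. I expect no real obstacle here: the content of the lemma is entirely in the cited existence-uniqueness statement (Theorem~5.3.1 of~\cite{BronsteinBook} and Lemma~2.1 of~\cite{CDL2018}), and this lemma merely packages its corollaries for later use. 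If one wanted a slicker presentation, one could observe that $\hp_t$ is the identity on the set of $t$-simple elements and kills $K(t)' + K[t]$, and that these two facts together with $C_K$-linearity give all three items at once; but the direct argument above is already short enough that I would simply write it out.
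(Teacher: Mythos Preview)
Your proposal is correct and matches the paper's approach exactly: the paper offers no proof beyond the one-line remark ``Because of the uniqueness of Hermitian parts and Lemma~2.1 in~\cite{CDL2018}, we have the following lemma,'' and your argument simply unwinds that uniqueness claim in each of the three cases. Your attention to why $0$ is $t$-simple and why $K(t)'+K[t]$ is a $C_K$-subspace fills in the only details the paper leaves implicit.
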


The next two lemmas give some nice properties of proper elements and logarithmic derivatives.

\begin{lemma}  \label{LM:proper}
If $f{\in}K(t)$ is $t$-proper, then $f{-}\hp_t(f){\in}K(t)^\prime$.
\end{lemma}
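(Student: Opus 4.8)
The plan is to reduce the claim to the uniqueness of the proper-plus-polynomial splitting in~\eqref{EQ:decomposition}, after recording that the derivation preserves $t$-properness. Since the Hermitian part $h := \hp_t(f)$ is $t$-simple, it is in particular $t$-proper, so $f - h$ is a $t$-proper element of $K(t)$. On the other hand, the definition of $\hp_t$ (see~\eqref{EQ:hermite}) supplies some $g \in K(t)$ and $p \in K[t]$ with $f - h = g^\prime + p$.

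The one substantive point is the following auxiliary fact: if $w \in K(t)$ is $t$-proper, then $w^\prime$ is $t$-proper. To see this, write $w = a/b$ in lowest terms with $a, b \in K[t]$ and $\deg_t a < \deg_t b$. Because $t$ is a primitive generator we have $t^\prime \in K$, so differentiation does not increase the $t$-degree of an element of $K[t]$; hence $\deg_t a^\prime \le \deg_t a$ and $\deg_t b^\prime \le \deg_t b$, and the numerator $a^\prime b - a b^\prime$ of $w^\prime = (a^\prime b - a b^\prime)/b^2$ has $t$-degree strictly less than $\deg_t(b^2)$. This bound is unaffected by cancelling common factors, so $w^\prime$ is indeed $t$-proper. (It is exactly here that primitivity, rather than the weaker $t^\prime \in K[t]$, is needed.)

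Now decompose $g = \tilde{g} + q$ as in~\eqref{EQ:decomposition}, with $\tilde{g} \in K(t)$ being $t$-proper and $q \in K[t]$. Then $\tilde{g}^\prime$ is $t$-proper by the auxiliary fact, while $q^\prime \in K[t]$; therefore $f - h = \tilde{g}^\prime + (q^\prime + p)$ exhibits the $t$-proper element $f - h$ as a $t$-proper element plus a polynomial in $K[t]$. By the uniqueness in~\eqref{EQ:decomposition} this forces $q^\prime + p = 0$ and $f - h = \tilde{g}^\prime \in K(t)^\prime$, which is the assertion. I expect no real obstacle beyond the degree bookkeeping in the auxiliary fact above (and being careful that it is the primitive case $t^\prime\in K$ that makes it go through); everything else is a direct appeal to the uniqueness statements already in hand.
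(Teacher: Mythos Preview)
Your proof is correct and follows essentially the same route as the paper's: both write $f-\hp_t(f)=g'+p$ from~\eqref{EQ:hermite}, split $g$ into its $t$-proper part plus a polynomial, use that the derivative of a $t$-proper element is $t$-proper (since $t'\in K$), and then invoke the uniqueness of the proper-plus-polynomial decomposition to force the polynomial piece to vanish. The only difference is cosmetic---you spell out the degree argument for the auxiliary fact more explicitly than the paper does.
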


\begin{proof}
Since $t$ is a primitive generator over $K$, the derivative of a $t$-proper element of $K(t)$ is also $t$-proper. By~\eqref{EQ:hermite}, $f = \hp_t(f) + g^\prime + p$ for some $g \in K(t)$ and $p \in K[t]$. Let $r$ be the $t$-proper part of $g$. Thus, $f - \hp_t(f) - r^\prime = p + (g - r)^\prime$
whose left-hand side is $t$-proper and whose right-hand side is a polynomial in $t$. Thus, both sides must be zero.  Consequently, $f - \hp_t(f)=r^\prime\in K(t)'.$
\end{proof}

\begin{lemma} \label{LM:logder}
Let $f \in K(t)$ be a logarithmic derivative.
\begin{itemize}
\item[(i)] If $f$ is $t$-proper, then $f$ is $t$-simple.
\item[(ii)] There exists a $t$-simple logarithmic derivative $g \in K(t)$ and a logarithmic derivative $h \in K$ such that $f=g+h$.
\end{itemize}
\end{lemma}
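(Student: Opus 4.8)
The plan is to prove part (i) directly from the structure theory of logarithmic derivatives in a primitive extension, and then derive part (ii) by combining (i) with the Hermite-style splitting \eqref{EQ:hermite} together with Lemma \ref{LM:proper}. For part (i), suppose $f = g'/g$ with $g \in K(t)$ and assume $f$ is $t$-proper. Factor $g$ over $\overline{K}(t)$ into irreducibles and write the partial-fraction-style expansion of the logarithmic derivative: $g'/g = \sum_i e_i \, p_i'/p_i + (\text{polynomial part})$, where the $p_i \in K[t]$ are the distinct squarefree factors and the $e_i$ are the multiplicities (integers). Since $t$ is a primitive generator, $t' \in K$, so for a squarefree $p_i \in K[t]$ of positive degree the derivative $p_i'$ has degree strictly less than $\deg p_i$ (the leading term differentiates into $K$), hence each $p_i'/p_i$ is already $t$-proper with denominator $p_i$; moreover $\gcd(p_i, p_i') = 1$ because $p_i$ is squarefree, i.e. $p_i$ is $t$-normal by Proposition \ref{PROP:primitivemonomial}. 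The hypothesis that $f$ itself is $t$-proper forces the polynomial part to vanish and forces every multiplicity $e_i$ appearing with a denominator of positive degree to be $\pm 1$ (if some $|e_i| \geq 2$, that factor would contribute a non-$t$-normal denominator and $f$ would not be $t$-proper — this is exactly where the $t$-proper hypothesis does the work). Therefore $f$ has a squarefree, hence $t$-normal, denominator and is $t$-proper, i.e. $f$ is $t$-simple.

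For part (ii), start from \eqref{EQ:hermite}: there is a unique $t$-simple $h_0 = \hp_t(f)$ with $f \equiv h_0 \bmod (K(t)' + K[t])$, so $f = h_0 + u' + p$ for some $u \in K(t)$, $p \in K[t]$. Let $r$ be the $t$-proper part of $u$ and $q = u - r \in K[t]$; then $f - r' = h_0 + (p + q')$, where $r'$ is $t$-proper (derivatives of $t$-proper elements are $t$-proper since $t$ is primitive) and $p + q' \in K[t]$. Split $f$ itself into its $t$-proper part $f_{\mathrm{pr}}$ and polynomial part $f_{\mathrm{pol}} \in K[t]$; matching the $t$-proper and polynomial components gives $f_{\mathrm{pr}} = h_0 + r'$ and $f_{\mathrm{pol}} = p + q'$. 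The key observation is that $f$ being a logarithmic derivative constrains $f_{\mathrm{pol}}$: differentiating a logarithmic derivative and using that $t' \in K$, one sees that the polynomial part $f_{\mathrm{pol}}$ must in fact be a constant in $K$ — indeed, if $f = g'/g$ and $g = c\prod p_i^{e_i}$ with $c \in K$, the polynomial part of $g'/g$ is just $c'/c \in K$ (every $p_i'/p_i$ is $t$-proper by the degree argument above). So $f = f_{\mathrm{pr}} + c'/c$ with $f_{\mathrm{pr}}$ a $t$-proper logarithmic derivative (it equals $g'/g$ minus an element of $K$, hence $(\bar g)'/\bar g$ for $\bar g = g/\text{(a suitable element)}$ — more directly, $f_{\mathrm{pr}} = f - c'/c = (g/c)'/(g/c)$ is a logarithmic derivative in $K(t)$). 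By part (i), $f_{\mathrm{pr}}$ is $t$-simple; set $g := f_{\mathrm{pr}}$ and $h := c'/c$, a logarithmic derivative in $K$, and we are done.

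The main obstacle is pinning down the structure of the polynomial part of a logarithmic derivative in $K(t)$ and proving it lies in $K$. This requires the clean "multiplicity $\pm 1$" / "denominator squarefree" analysis of $g'/g$ via the squarefree factorization of $g$, and crucially uses twice the fact that $t$ is a \emph{primitive} generator (so $t' \in K$), which makes $\deg_t(p_i') < \deg_t(p_i)$ for squarefree $p_i$ of positive degree and hence keeps every $p_i'/p_i$ strictly $t$-proper. One should also be slightly careful that $g$ a priori lies in $K(t)$, not $K[t]$, so one writes $g$ as a ratio of squarefree-factored polynomials and notes that $g'/g$ is additive in $g$, reducing to the polynomial case; the denominators contribute with negative multiplicities, which does not affect the $\pm 1$ conclusion. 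Everything else — the matching of $t$-proper and polynomial parts, invoking Proposition \ref{PROP:primitivemonomial} to pass between "$t$-normal" and "squarefree", and invoking Lemma \ref{LM:proper} if one prefers to phrase the $t$-proper part argument through Hermitian parts — is routine bookkeeping.
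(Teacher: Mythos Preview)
Your overall strategy matches the paper's: both arguments use the logarithmic derivative identity to write $f = g'/g = \alpha'/\alpha + \sum_i e_i\, p_i'/p_i$ with $\alpha \in K$ and monic, irreducible, pairwise-coprime $p_i \in K[t]$, observe that each $p_i'/p_i$ is $t$-proper because $t$ is primitive, and then read off both parts. The paper's proof of (ii) goes directly to this identity; your opening detour through the Hermite splitting~\eqref{EQ:hermite} and Lemma~\ref{LM:proper} is superfluous, since you end up discarding it once you notice that the polynomial part of $g'/g$ is exactly $\alpha'/\alpha \in K$.

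There is, however, a genuine error in your argument for (i). You assert that $t$-properness of $f$ ``forces every multiplicity $e_i$ appearing with a denominator of positive degree to be $\pm 1$,'' claiming that $|e_i| \geq 2$ would produce a non-$t$-normal denominator. This is false: the logarithmic derivative of $p^{e}$ is $e\,p'/p$, whose denominator is $p$, not $p^{e}$; the pole at $p$ is simple for every integer $e$. For a concrete counterexample take $g = t^2$, so that $g'/g = 2t'/t$ is $t$-proper with squarefree denominator $t$, yet $e_1 = 2$. The squarefreeness of the denominator of $f$ follows immediately from the identity $f = \alpha'/\alpha + \sum_i e_i\, p_i'/p_i$ with the $p_i$ squarefree and coprime, and needs no constraint whatsoever on the $e_i$. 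The $t$-proper hypothesis in (i) is not ``where the work happens'' for $t$-normality; it is simply one half of the conclusion ``$t$-simple'' (and, if you like, forces $\alpha'/\alpha = 0$). Once you delete the spurious $\pm 1$ claim, your argument for (i) is correct and coincides with the paper's one-line proof via Proposition~\ref{PROP:primitivemonomial}.
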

\begin{proof}
(i) The only thing we need to show is that the denominator of $f$ is $t$-normal.
By the logarithmic derivative identity \cite[Theorem 3.1.1 (v)]{BronsteinBook}\footnotemark[1], the denominator of $f$ is squarefree,
which is also $t$-normal by Proposition~\ref{PROP:primitivemonomial}.

(ii) By irreducible factorization and the logarithmic derivative identity, $f = \left(\sum_{i} m_i p_i^\prime/p_i\right) + \alpha^\prime/\alpha,$
where $\alpha \in K$, $m_i \in \Z$, and $p_i \in K[t]$ is monic  irreducible and pairwise coprime. Then each $p_i^\prime/p_i$ is $t$-proper, because $t$ is primitive over $K$.
Setting $g=\sum_{i} m_i p_i^\prime/p_i$ and $h=\alpha^\prime/\alpha$ yields (ii). 
\end{proof}

The following lemma will be useful when we construct our remainders. This is the same as Lemma~2.3 in \cite{CDL2018}.

\begin{lemma} \label{LM:lc}
Let $p\in K[t].$ If $p\in K(t)'$, then the leading coefficient of $p$ is equal to $ct'+b'$ for some $c \in C_K$ and $b\in K$. As a special case, if $p \in K \cap K(t)^\prime$, then $p \equiv c t^\prime \mod K'$. 
\end{lemma}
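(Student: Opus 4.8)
The plan is to reduce to the case where $p$ arises as the derivative of a \emph{polynomial} in $K[t]$, to expand that derivative coefficient by coefficient, and then to read off the leading coefficient. So suppose $p = f'$ with $f \in K(t)$, and assume $p \neq 0$ (the case $p = 0$ being trivial, with $c = b = 0$). Writing $f = g + q$ with $g$ the $t$-proper part of $f$ and $q \in K[t]$ as in~\eqref{EQ:decomposition}, I would observe that $q' \in K[t]$ because $t' \in K$, whereas $g'$ is again $t$-proper because $t$ is a primitive generator (the fact recalled at the beginning of the proof of Lemma~\ref{LM:proper}). Matching this against $p = f' = g' + q' \in K[t]$ and invoking the uniqueness of the proper-plus-polynomial splitting forces $g' = 0$, so $p = q'$, and we may as well assume $f = q = \sum_{i=0}^{n} a_i t^i$ with $a_i \in K$ and $a_n \neq 0$.

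Next I would expand
\[
  f' \;=\; \sum_{i=0}^{n} a_i'\, t^i \;+\; t' \sum_{i=1}^{n} i\, a_i\, t^{i-1},
\]
so that the coefficient of $t^n$ in $p$ equals $a_n'$ while, for $0 \le k < n$, the coefficient of $t^k$ in $p$ equals $a_k' + (k+1)\, a_{k+1}\, t'$; in particular $\deg_t p \le n$. The key step --- the only part that is not pure bookkeeping --- is that the degree cannot drop by two, i.e.\ $\deg_t p \ge n-1$ when $n \ge 1$: if the coefficients of both $t^n$ and $t^{n-1}$ in $p$ vanished, then $a_n' = 0$ and $a_{n-1}' + n\, a_n\, t' = 0$, hence $(a_{n-1} + n\, a_n\, t)' = 0$, so $a_{n-1} + n\, a_n\, t \in C_{K(t)} = C_K \subseteq K$ --- impossible, since this element has $t$-degree $1$. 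This is exactly where the hypothesis $C_{K(t)} = C_K$ (that $t$ is a genuine primitive generator, not merely a primitive element) is used, and I expect securing this degree bound cleanly to be the main obstacle; everything else is routine.

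It then remains to treat the two surviving cases. If $\deg_t p = n$ (which includes $n = 0$), the leading coefficient of $p$ is $a_n' = 0 \cdot t' + a_n'$, so one takes $c = 0$ and $b = a_n$. If $\deg_t p = n-1$, then $a_n' = 0$, so $a_n \in C_K$, and the leading coefficient of $p$ is $a_{n-1}' + n\, a_n\, t' = c\, t' + b'$ on setting $c := n\, a_n \in C_K$ (legitimate because $a_n \in C_K$ and $K$ has characteristic zero) and $b := a_{n-1} \in K$; this proves the first assertion. For the special case $p \in K \cap K(t)'$ with $p \neq 0$, one views $p$ as a polynomial in $t$ of degree $0$, so $p$ is its own leading coefficient and the first part gives $p = c\, t' + b'$; then $p - b' = c\, t'$ with $b' \in K'$, i.e.\ $p \equiv c\, t' \mod K'$, and the case $p = 0$ is immediate.
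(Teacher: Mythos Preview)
Your proof is correct and complete. The paper itself does not give a proof of this lemma but simply cites it as Lemma~2.3 of~\cite{CDL2018}; your argument is the natural direct one (reduce to a polynomial antiderivative via the proper/polynomial splitting, expand the derivative, and rule out a degree drop of two using $C_{K(t)}=C_K$), and is essentially what one finds in that reference.
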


\section{Matryoshka Decompositions}
\label{SECT:md}

We denote $\{1,2,\ldots,n \}$ and $\{0,1,2,\ldots,n \}$ by~$[n]$ and~$[n]_0$, respectively. Let $(K_0,\, ^\prime)$ be a differential field and for each $i \in [n]$, $K_i =K_{i-1}(t_i)$, where $t_i$ is transcendental over~$K_{i-1}$ and $t_i^\prime \in K_i$. Then we have a tower of differential extensions:
\begin{equation} \label{EQ:tower}
\begin{array}{ccccccc}
K_0 & \subset & K_1 & \subset & \cdots & \subset & K_n \\
&  & \shortparallel &  & &   & \shortparallel \\
& & K_0(t_1) & \subset & \cdots & \subset & K_{n-1}(t_n).
\end{array}
\end{equation}
We use $K_0(\bar{t})$ to denote the tower \eqref{EQ:tower}, where $\bar{t}:=(t_1,\ldots,t_n)$ refers to the generators in the chain of field extensions
(to contrast with $K_n$, which is just the largest field in the chain).

We can describe $K_0(\bar t)$ based on the nature of its generators. If $K_0 = (C(x), d/dx)$ and each $t_i$ in~\eqref{EQ:tower} is a primitive generator over~$K_{i-1}$ for all $i \in [n]$, then we call $K_n$ a {\em primitive extension} over~$K_0$ and $K_0(\bar t)$ a {\em primitive tower}. By Definition~\ref{DEF:primitive}, $C_{K_n}=C_{K_0}$, which is equal to $C$. Furthermore, a primitive tower is said to be {\em logarithmic} if each $t_i$ is a logarithmic generator over $K_{i-1}$. For brevity, the primitive tower $K_0(\bar t)$ is also denoted by $K_n$ when its generators are clear from the context.

For each $i \in [n]$, an element of $K_n$ from \eqref{EQ:tower} is said to be {\em $t_i$-proper} if it is free of $t_{i+1}, \ldots, t_{n}$ and the degree of its numerator in $t_i$ is lower than that of its denominator. Denote by $T_i$ the multiplicative monoid generated by $t_{i+1}, \ldots, t_n$ for all~$i$ with $0 \leq i <n$, and set $T_n=\{1\}$. For each $i~\in~[n]$, let~$P_i$ be the additive group consisting of all the linear combinations of the elements of $T_i$ whose coefficients are $t_i$-proper. Furthermore, let $P_0=K_0[t_1,\ldots,t_n]$.  All of the $P_i$'s are closed under multiplication. A routine induction based on~\eqref{EQ:decomposition} shows
\begin{equation} \label{EQ:direct}
K_n= \bigoplus_{i=0}^n P_i.
\end{equation}

Let $\pi_i$ be the projection from $K_n$ onto $P_i$ with respect to~\eqref{EQ:direct}. For every element $f \in K_n$, we have that
$$f = \sum_{i=0}^n \pi_i(f),$$
which is called the {\em matryoshka decomposition} of $f$. 
Figure \ref{FIG:matryoshka} illustrates this namesake. We also call~$\pi_i(f)$ the {\em $i$-th projection} of~$f$ for all $i \in [n]_0$. This new view allows us to describe the following ordering (which will be used to define a remainder).

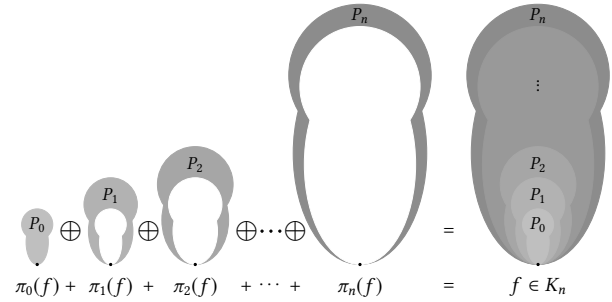
\begin{figure}[ht]
\centering
\resizebox{8cm}{!}{
\begin{tikzpicture}[thick,scale=1, every node/.style={scale=1.5}]
\tikzmath{\xpos=-11; \ypos=0; \dotrad=0.05; \ybtwn=-2.5; \xposzero=-1.5; \xposone=1.8;\xpostwo=5.6; \xposn=13; \xright=21; \yproj=-5; \sizefont=3; \sizefontp=3.5;}
\draw[color=gray!50, fill=gray!50] (\xpos+\xposzero,\ypos-3) ellipse (0.5 and 1);
\draw[color=gray!50, fill=gray!50] (\xpos+\xposzero,\ypos-2.2) circle (0.7cm);
\draw[fill] (\xpos+\xposzero,\ypos-4) circle (\dotrad);
\node at (\xpos+\xposzero,\ypos+\yproj) [fontscale=\sizefontp] {$\pi_0(f)$};
\node at (\xpos+\xposzero+1.5,\ypos+\yproj) [fontscale=\sizefont] {$+$};
\node at (\xpos+\xposzero,\ypos-2.2) [fontscale=\sizefont] {$P_0$};
\node at (\xpos+\xposzero+1.5,\ypos+\ybtwn)[fontscale=\sizefont] {$\bigoplus$};
\draw[color=gray!60, fill=gray!60] (\xpos+\xposone,\ypos-2.5) ellipse (1 and 1.5);
\draw[color=gray!60, fill=gray!60] (\xpos+\xposone,\ypos-1.3) circle (1.2cm);
\draw[color=white, fill=white] (\xpos+\xposone,\ypos-3) ellipse (0.5 and 1);
\draw[color=white, fill=white] (\xpos+\xposone,\ypos-2.2) circle (0.7cm);
\draw[fill] (\xpos+\xposone,\ypos-4) circle (\dotrad);
\node at (\xpos+\xposone,\ypos+\yproj)[fontscale=\sizefontp] {$\pi_1(f)$};
\node at (\xpos+3.5,\ypos+\yproj) [fontscale=\sizefont]{$+$};
\node at (\xpos+\xposone,\ypos-0.9) [fontscale=\sizefont]{$P_1$};
\node at (\xpos+3.5,\ypos+\ybtwn)[fontscale=\sizefont] {$\bigoplus$};
\draw[color=gray!70, fill=gray!70] (\xpos+\xpostwo,\ypos-2) ellipse (1.5 and 2);
\draw[color=gray!70, fill=gray!70] (\xpos+\xpostwo,\ypos-0.4) circle (1.7cm);
\draw[color=white, fill=white] (\xpos+\xpostwo,\ypos-2.5) ellipse (1 and 1.5);
\draw[color=white, fill=white] (\xpos+\xpostwo,\ypos-1.3) circle (1.2cm);
\draw[fill] (\xpos+\xpostwo,\ypos-4) circle (\dotrad);
\node at (\xpos+\xpostwo,\ypos+\yproj)[fontscale=\sizefontp] {$\pi_2(f)$};
\node at (\xpos+\xpostwo,\ypos+0.5) [fontscale=\sizefont]{$P_2$};
\node at (\xpos+7.9,\ypos+\yproj)[fontscale=\sizefont] {$+$};
\node at (\xpos+7.9,\ypos+\ybtwn) [fontscale=\sizefont]{$\bigoplus$};
\draw[fill] (\xpos+8.6,\ypos+\ybtwn) circle (0.05);
\draw[fill] (\xpos+9,\ypos+\ybtwn) circle (0.05);
\node at (\xpos+9,\ypos+\yproj)[fontscale=\sizefont] {$\cdots$};
\draw[fill] (\xpos+9.4,\ypos+\ybtwn) circle (0.05);
\node at (\xpos+10.1,\ypos+\yproj)[fontscale=\sizefont] {$+$};
\node at (\xpos+10.1,\ypos+\ybtwn)[fontscale=\sizefont] {$\bigoplus$};
\draw[color=gray!90, fill=gray!90] (\xpos+\xposn,\ypos+1) ellipse (3 and 5);
\draw[color=gray!90, fill=gray!90] (\xpos+\xposn,\ypos+4.5) circle (3.2cm);
\draw[color=white, fill=white] (\xpos+\xposn,\ypos+0.5) ellipse (2.5 and 4.5);
\draw[color=white, fill=white] (\xpos+\xposn,\ypos+4) circle (2.7cm);
\draw[fill] (\xpos+\xposn,\ypos-4) circle (\dotrad);
\node at (\xpos+\xposn,\ypos+\yproj)[fontscale=\sizefontp] {$\pi_n(f)$};
\node at (\xpos+17,\ypos+\yproj)[fontscale=\sizefont] {$=$};
\node at (\xpos+\xposn,\ypos+7.2)[fontscale=\sizefont] {$P_n$};
\node at (\xpos+17,\ypos+\ybtwn) [fontscale=\sizefont]{$=$};

\draw[color=gray!90, fill=gray!90] (\xpos+\xright,\ypos+1) ellipse (3 and 5);
\draw[color=gray!90, fill=gray!90] (\xpos+\xright,\ypos+4.5) circle (3.2cm);
\draw[color=gray!80, fill=gray!80] (\xpos+\xright,\ypos+0.5) ellipse (2.5 and 4.5);
\draw[color=gray!80, fill=gray!80] (\xpos+\xright,\ypos+4) circle (2.7cm);
\draw[color=gray!70, fill=gray!70] (\xpos+\xright,\ypos-2) ellipse (1.5 and 2);
\draw[color=gray!70, fill=gray!70] (\xpos+\xright,\ypos-0.4) circle (1.7cm);
\draw[color=gray!60, fill=gray!60] (\xpos+\xright,\ypos-2.5) ellipse (1 and 1.5);
\draw[color=gray!60, fill=gray!60] (\xpos+\xright,\ypos-1.3) circle (1.2cm);
\draw[color=gray!50, fill=gray!50] (\xpos+\xright,\ypos-3) ellipse (0.5 and 1);
\draw[color=gray!50, fill=gray!50] (\xpos+\xright,\ypos-2.2) circle (0.7cm);
\draw[fill] (\xpos+\xright,\ypos-4) circle (\dotrad);
\node at (\xpos+\xright,\ypos+7.2)[fontscale=\sizefont] {$P_n$};
\node at (\xpos+\xright,\ypos+4.2)[fontscale=\sizefont] {$\vdots$};
\node at (\xpos+\xright,\ypos+0.5)[fontscale=\sizefont] {$P_2$};
\node at (\xpos+\xright,\ypos-0.9)[fontscale=\sizefont] {$P_1$};
\node at (\xpos+\xright,\ypos-2.2)[fontscale=\sizefont] {$P_0$};
\node at (\xpos+\xright,\ypos+\yproj)[fontscale=\sizefontp] {$f\in K_n$};
\end{tikzpicture}
}

\caption{Matryoshka Decomposition}
\label{FIG:matryoshka}
\Description[short]{long}
\end{figure}

Suppose that $\prec$ is the purely lexicographic order on $T_0$, in which $t_1 \prec t_2 \prec \cdots \prec t_n$. Then $\prec$ is also a monomial order on each $T_i$, because $T_i \subseteq T_0.$ For $f \in K_n$ and $i \in [n]_0$, the $i$-th projection of $f$ can be viewed as a polynomial in $K_i[t_{i+1}, \ldots, t_n]$, which allows us to define the {\em $i$-th head monomial} of $f$, denoted by~$\hmm_i(f)$, to be the highest monomial in $T_i$ that appears in~$\pi_i(f)$ if $\pi_i(f)$ is non-zero, and zero if $\pi_i(f)$ is zero.

We define the {\em $i$-th head coefficient of $f$}, denoted by $\hc_i(f)$, to be the coefficient of $\hmm_i(f)$ in $\pi_i(f)$ if $\pi_i(f)$ is non-zero, and zero if $\pi_i(f)$ is zero. By the matryoshka decomposition, $\hc_i(f)$ is $t_i$-proper for all $i \in [n].$

The {\em head monomial} of~$f$, denoted by~$\hmm(f)$, is defined to be the highest monomial among $\hmm_0(f)$, $\hmm_1(f),$ \ldots, $\hmm_n(f)$, in which zero is regarded as the lowest \lq\lq monomial\rq\rq.
Let $\bI_f = \{ i \in [n]_0 \mid  \hmm_i(f) = \hmm(f) \}.$
The {\em head coefficient} of $f$, denoted by $\hc(f)$, is defined to be $\sum_{i \in \bI_f} \hc_i(f)$.

\begin{defn} \label{DEF:order}
For $f, g \in K_n$, denote $d_f$ and $d_g$ to be the degrees of the  denominators of $f$ and $g$ with respect to $t_n$, respectively. We say that $f$ is {\em lower than} $g$, denoted by $f \prec g$, if either $d_f<d_g$, or $d_f=d_g$ and $\hmm(f) \prec \hmm(g).$ We say that $f$ is {\em not higher than}~$g$, denoted by $f\preceq g$, if either $f \prec g$, or $d_f=d_g$ and $\hmm(f) = \hmm(g).$
\end{defn}

Since $\prec$ on $T_0$ is a Noetherian total order, the partial order on $K_n$ given by Definition \ref{DEF:order} is also Noetherian, that is, every nonempty set in $K_n$ has a minimal element w.r.t.~$\prec$. We can use this order to define a desired remainder of the given function. Let $f \in K_n$ and
\begin{equation} \label{EQ:equiv}
R_f := \{g \in K_n \mid g \equiv f \mod K_n^\prime\}.
\end{equation}
Thus, there exists a minimal element $r \in R_f$. We note that such a minimal element is not unique.

\begin{defn} \label{DEF:remainder}
Given $f\in K_n$, a minimal element of $R_f$ is said to be a {\em remainder} of $f$. Moreover, let $r\in K_n$. Then we say that $r$ is a remainder if $r$ is a remainder of itself.
\end{defn}

As usual, simple elements (or Hermitian parts) play an important role when we construct remainders. Before we move on to the next section, we first generalize the definition of $t$-simple elements from the previous section with the help of the matryoshka decomposition.

\begin{defn} \label{DEF:simpleintower}
An element $f \in K_n$ is said to be {\em simple} if  $\pi_i(f)$ is $t_i$-simple for all $i \in [n]_0$, where $t_0=x$.
\end{defn}

\newpage

\section{Additive Decompositions}
\label{SECT:ad}

Remainders in a tower are described in terms of minimality, which is not constructive. In this section, we will present an algorithm for constructing a remainder in an S-primitive tower (see Definition~\ref{DEF:S-primitive}), based on Hermite reduction and integration by parts. To know when to terminate the algorithm, we need to be able to identify the first generator present in a given monomial  (this is the same notion as {\em scale} in~\cite{CDL2018}).

\begin{defn} \label{DEF:indicator}
For a monomial $M = t_{1}^{d_1} \cdots t_n^{d_n} \in T_0$,
the {\em indicator} of~$M$, denoted by $\ind_n(M)$, is defined to be $n$ if $M=1$,
or defined to be $\min\{i\in[n] \mid d_i\neq 0\}$.
\end{defn}

For $M \in T_0$, we set $K_n^{(\prec M)} := \left\{ f \in K_n
\mid \hmm(f)\prec M \right\}.$ Note that  $K_n^{(\prec M)}$ is closed under addition. The following lemma describes sufficient conditions for reducing a given term with respect to $\prec$ via integration by parts.

\begin{lemma} \label{LM:ibpreduce}
Let $K_n$ be primitive, $M \in T_0$ with indicator~$m$, and $a \in K_{m-1}$.
Then $a M \in K_n^\prime + K_n^{(\prec M)}$ if
\begin{itemize}
\item[(i)] $a \in K_{m-1}^\prime$, or
\item[(ii)] $a\in\spa_C\{t_1^\prime, \ldots, t_m^\prime\}.$
\end{itemize}
\end{lemma}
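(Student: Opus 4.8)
The plan is to treat the two cases separately and in each case produce an explicit antiderivative (or integration-by-parts identity) that absorbs $aM$ up to something strictly lower in $\prec$. Throughout, write $M = t_{m}^{d_m} t_{m+1}^{d_{m+1}}\cdots t_n^{d_n}$ with $d_m \geq 1$ (if $m = n$ and $M=1$ the claim is essentially about $a\in K_{n-1}$ lying in $K_n'+\mathbb{C}$-span of the $t_i'$, which is handled by the same computation with $d_n$ interpreted suitably, so I would first dispatch the degenerate case $M=1$ by noting $aM=a$ and using Lemma~\ref{LM:lc}). For case (i), suppose $a = b'$ with $b\in K_{m-1}$. I would integrate by parts ``against'' $t_m$: since $t_m' \in K_{m-1}$, we have
\[
\big(b\, t_m^{d_m} t_{m+1}^{d_{m+1}}\cdots t_n^{d_n}\big)' = b'\, t_m^{d_m}\cdots t_n^{d_n} + b\Big(t_m^{d_m}\cdots t_n^{d_n}\Big)'.
\]
The first summand is exactly $aM$. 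The second summand, when expanded via the product rule, is a $K_{m-1}$-linear combination of monomials each of which is $M$ with one factor $t_j$ ($m\le j\le n$) replaced by $t_j'\in K_{j-1}$; every such monomial has indicator $>m$ in its ``new'' part or lower $t_m$-degree, and in all cases its head monomial is $\prec M$ because dropping a factor of $t_m$ (or replacing $t_j$, $j>m$, by an element free of $t_j,\dots$) strictly lowers the monomial in the lex order with $t_1\prec\cdots\prec t_n$. Hence the second summand lies in $K_n^{(\prec M)}$, so $aM = \big(b\,t_m^{d_m}\cdots t_n^{d_n}\big)' - (\text{something in } K_n^{(\prec M)}) \in K_n' + K_n^{(\prec M)}$, as desired. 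A small point to get right here is that the denominator degree in $t_n$ does not increase, so ``$\prec M$'' really does hold in the sense of Definition~\ref{DEF:order}; this is fine since $b\in K_{m-1}$ contributes no $t_n$ in the denominator beyond what $M$ already has, and each term in the second summand has $t_n$-degree in the denominator no larger than $d_f$.

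For case (ii), suppose $a = \sum_{k=1}^{m} c_k t_k'$ with $c_k \in C$. By $C$-linearity of everything in sight it suffices to treat $a = t_k'$ for a single $k\le m$. If $k = m$, then $t_m' M = \tfrac{1}{d_m+1}\big(t_m^{d_m+1} t_{m+1}^{d_{m+1}}\cdots t_n^{d_n}\big)' - \tfrac{1}{d_m+1}\, t_m^{d_m+1}\big(t_{m+1}^{d_{m+1}}\cdots t_n^{d_n}\big)'$, and as in case (i) the second term is in $K_n^{(\prec M)}$ (its $t_m$-degree is $d_m+1$ but each monomial has had some $t_j$, $j>m$, replaced, lowering it; and if there are no such factors, i.e.\ $M = t_m^{d_m}$, the second term is simply $0$). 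If $k < m$, then $t_k' \in K_{k-1}\subseteq K_{m-1}$ and $t_k'M = (t_k M)' - t_k M'$, where $M' = \big(t_m^{d_m}\cdots t_n^{d_n}\big)'$ expands as before into a $K_{m-1}$-combination of monomials $\prec M$; multiplying by $t_k$ (with $k<m$, hence $t_k$ below every generator appearing in $M$) keeps each resulting monomial $\prec M$, since in the lex order the block of exponents on $t_m,\dots,t_n$ dominates and has strictly decreased. So again $t_k'M \in K_n' + K_n^{(\prec M)}$.

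The only genuinely delicate step is the bookkeeping that every monomial produced by expanding $M'$ via the product rule is strictly $\prec M$ — one must check this against the full two-tier order of Definition~\ref{DEF:order} (first denominator degree in $t_n$, then head monomial), not merely against the lex order on $T_0$. I would handle this uniformly with the observation that each such monomial is of the form $(\text{element of }K_{j-1})\cdot t_m^{d_m}\cdots t_j^{d_j-1}\cdots t_n^{d_n}$ for some $j\in\{m,\dots,n\}$: its denominator $t_n$-degree is $\le d_f$ (nothing new is introduced in the denominator when $j<n$, and when $j=n$ the factor $t_n^{d_n-1}$ only helps), and its leading monomial in $T_0$ is obtained from $M$ by decrementing the exponent of $t_j$ and possibly multiplying by a monomial in $t_1,\dots,t_{j-1}$, which is lex-smaller than $M$. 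Packaging this as a short sublemma (``$\hmm$ of any term of $b\cdot(\text{proper monomial }/M)'$ is $\prec M$ for $b\in K_{m-1}$'') keeps the main proof clean, and then cases (i) and (ii) each reduce to one line of integration by parts plus an invocation of that sublemma.
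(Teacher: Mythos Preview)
Your proposal is correct and follows essentially the same route as the paper: integration by parts for (i), and for (ii) the antiderivative $\tfrac{1}{d_m+1}\,t_m^{d_m+1}$ to absorb the $t_m'$ component, with the $k<m$ contributions reducing to case~(i) (the paper simply observes $\sum_{k<m}c_k t_k'=\big(\sum_{k<m}c_k t_k\big)'\in K_{m-1}'$, whereas you redo that special instance of (i) by hand via $(t_kM)'$). One small cleanup: the degenerate case $M=1$ does not require Lemma~\ref{LM:lc}---in both (i) and (ii) you have $aM=a\in K_n'$ immediately, since $\sum_j c_j t_j'=\big(\sum_j c_j t_j\big)'$.
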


\begin{proof}
It is obvious for $M=1$. Assume that $M \neq 1$.

(i) Let $M=t_{m}^{d_{m}} \cdots t_n^{d_n}$ for $d_{m}, \ldots, d_n \in \N$ and $d_m >0$.
Since $K_n$ is a primitive extension over $K_0$, we have $t_j^\prime \in K_{j-1}$ for each $j$ with $m\leq j \leq n$. Then
\begin{equation} \label{EQ:derM}
M^\prime = \sum_{j=m}^n h_j N_j,
\end{equation}
where $h_j$ belongs to $K_{j-1}$, and $N_j$ is either equal to zero if $d_j=0$ or $t_j^{d_j-1} t_{j+1}^{d_{j+1}} \cdots t_n^{d_n}$ if $d_j>0$.
There exists $g \in K_{m-1}$ such that $a  = g^\prime$, because $a \in K_{m-1}^\prime$. With integration by parts and~\eqref{EQ:derM}, we see that
$g^\prime M = (g M)^\prime + \sum_{j=m}^n (-g h_j) N_j.$
Let $M_j = t_j^{d_j} t_{j+1}^{d_{j+1}} \cdots t_n^{d_n}$ for all $j$ with $m \le j \le n$. Then  $N_j \prec M_j \prec M$ implies $-g h_j N_j \prec M_j \prec M$ because $g h_j$ is free of $t_j, t_{j+1},\ldots,t_n$, and $N_j\prec M_j$. It follows that $\sum_{j=m}^n (-g h_j) N_j \prec M$ and $a M \in K_n^\prime + K_n^{(\prec M)}$.

(ii) Let $M = t_{m}^d N$, where $d \in \Z^+$ and $N \in T_{m}$. Since $a\in\spa_C\{t_1^\prime, \ldots, t_m^\prime\}$, $a = g + h$, where $g \in K_{m-1}^\prime$ and $h = c t_{m}^\prime$ for some $c \in C$. Then $g M \in  K_n^\prime + K_n^{(\prec M)}$ by (i) and \[ h M = c t_m^\prime t_m^d N = \left(\frac{c}{d+1} t_m^{d+1} \right)^\prime N. \] The lemma holds since $h M \in K_n^\prime + K_n^{(\prec N)}$ and $N \prec M$.
\end{proof}

In order to avoid increasing the order during the process and obtain sufficient {\em and} necessary conditions, we need to impose an extra condition on the generators: $$\hmm(t_i^\prime)=1 \text{ for all } i \in [n].$$
By Lemma~\ref{LM:proper} and the rational additive decomposition, for all $i \in [n]$, there exists a simple $h_i$ in  $K_{i-1}$ and a $g_i \in K_{i-1}$ such that $t_i^\prime = g_i^\prime + h_i$. Let $u_i = t_i-g_i$. Then $u_i$ is a primitive generator over $K_{i-1}$. Moreover, $K_0(\bar t)=K_0(\bar u)$. Therefore, without loss of generality, we can further assume that each $t_i^\prime$ is simple in $K_{i-1}$ for all $i \in [n]$.

\begin{defn}\label{DEF:S-primitive}
A tower $K_0(\bar t)$ is said to be {\em S-primitive} if it is a primitive tower and $t_i^\prime$ is simple for all $i \in [n]$.
\end{defn}

Our next goal is to construct remainders in S-primitive towers based on a special property of simple elements.

\begin{lemma} \label{LM:reduce0}
Let $K_n$ be an S-primitive tower. If $f \in K_n^\prime$ is simple, then $f \in \spa_C\{t_1^\prime, \ldots, t_n^\prime\}.$
\end{lemma}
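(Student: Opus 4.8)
The natural route is induction on the height $n$ of the tower $K_0(\bar t)$. For $n=0$, $f$ is an $x$-simple element of $K_0=(C(x),d/dx)$ lying in $K_0^\prime$; viewing $x$ as a primitive generator over the constant field $C$ (it is transcendental over $C$, $x^\prime=1\in C$, and $C_{C(x)}=C$), Lemma~\ref{LM:int}(i) gives $\hp_x(f)=0$ because $f\in K_0^\prime\subseteq K_0^\prime+C[x]$, and Lemma~\ref{LM:int}(ii) then forces $f=\hp_x(f)=0$; this is the assertion, since $\spa_C\emptyset=\{0\}$.

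For the inductive step, assume the claim holds for S-primitive towers of height $n-1$. Write $f=F^\prime$ with $F\in K_n=K_{n-1}(t_n)$, and split $F=G+P$ with $G$ being $t_n$-proper and $P\in K_{n-1}[t_n]$. Since $t_n$ is a primitive generator over $K_{n-1}$, the derivative $G^\prime$ is again $t_n$-proper, and $P^\prime\in K_{n-1}[t_n]$ because $t_n^\prime\in K_{n-1}$; hence $f=G^\prime+P^\prime$ is the $t_n$-proper-plus-polynomial splitting of $f$ from~\eqref{EQ:decomposition}. On the other hand, because $f$ is simple, $\pi_n(f)$ is $t_n$-simple (so $t_n$-proper), while $f-\pi_n(f)=\sum_{j<n}\pi_j(f)\in K_{n-1}$ (each $\pi_j(f)$ with $j<n$ is $t_j$-proper, hence free of $t_{j+1},\dots,t_n$). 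Uniqueness in~\eqref{EQ:decomposition} then identifies $\pi_n(f)=G^\prime$. As $G^\prime\in K_n^\prime$, Lemma~\ref{LM:int}(i) yields $\hp_{t_n}(G^\prime)=0$, and since $G^\prime=\pi_n(f)$ is $t_n$-simple, Lemma~\ref{LM:int}(ii) forces $\pi_n(f)=G^\prime=0$. Therefore $f\in K_{n-1}$.

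Now $f\in K_{n-1}\cap K_n^\prime$, so the special case of Lemma~\ref{LM:lc} produces $c\in C$ with $f-c\,t_n^\prime\in K_{n-1}^\prime$. I would next check that $f-c\,t_n^\prime$ remains simple: the projections $\pi_j$ are $C$-linear, for each $j$ the $t_j$-simple elements form a $C$-subspace (a sum of $t_j$-proper elements is $t_j$-proper, and the reduced denominator of such a sum divides the least common multiple of the individual squarefree denominators, which is again squarefree), $t_n^\prime$ is simple by S-primitivity, and $\pi_n(f-c\,t_n^\prime)=0$. The induction hypothesis, applied to the S-primitive tower $K_{n-1}$, then gives $f-c\,t_n^\prime\in\spa_C\{t_1^\prime,\dots,t_{n-1}^\prime\}$, and hence $f\in\spa_C\{t_1^\prime,\dots,t_n^\prime\}$.

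The step I expect to demand the most care is the chain $\pi_n(f)=G^\prime=0$: one must use that differentiation preserves $t_n$-properness (precisely where $t_n$ being a primitive generator enters), so that $f=G^\prime+P^\prime$ is genuinely the canonical decomposition, and one must note that simplicity of $f$ makes $\pi_n(f)$ coincide with the $t_n$-proper part of $f$, so that the uniqueness in~\eqref{EQ:decomposition} can be invoked. The remainder — extracting $c\,t_n^\prime$ and seeing that the subtraction preserves simplicity — is routine, modulo the small observation that simple elements form a $C$-subspace of $K_n$.
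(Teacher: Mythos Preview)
Your proof is correct and follows essentially the same route as the paper: show $\pi_n(f)=0$ via the Hermitian part, drop to $K_{n-1}$, use Lemma~\ref{LM:lc} to peel off $c\,t_n'$, verify the difference stays simple, and apply induction. The only cosmetic differences are that you start the induction at $n=0$ (the paper starts at $n=1$) and that you reach $\pi_n(f)=0$ by explicitly writing $f=F'$ and splitting $F$, whereas the paper argues directly that $\pi_n(f)=\hp_{t_n}(f)=0$ from Lemma~\ref{LM:int}(i)--(ii); both arguments amount to the same computation.
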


\begin{proof}
Since $f \in K_n^\prime$ and $\pi_n(f)$ is $t_n$-simple, $\pi_n(f) = \hp_{t_n}(f)=0$ by Lemma~\ref{LM:int} (i) and (ii). Thus, $f \in K_{n-1}$.

We proceed by induction on $n$.
If $n=1$, then $f \in K_0 \cap K_1^\prime$ is $x$-simple by Definition~\ref{DEF:simpleintower}. By Lemma~\ref{LM:lc}, there exists a $c \in C$ such that $f \equiv ct_1^\prime \mod K_0^\prime$. Since both $f$ and $t_1^\prime$ are $x$-simple, we have that $f = ct_1^\prime$ by Lemma~\ref{LM:int} (ii) and (iii).

Assume that $n>1$ and the lemma holds for $n-1$.
For $f$ in $K_{n-1} \cap K_n^\prime$, there is a $c \in C$ such that $f \equiv c t_n^\prime \mod K_{n-1}^\prime$ by Lemma~\ref{LM:lc}. Then $f-ct_n^\prime \in K_{n-1}^\prime$. 
Since both $f$ and $t_n^\prime$ are simple, $f - ct_n^\prime$ is also simple. By the induction hypothesis, we have that $f - ct_n^\prime \in \spa_C\{t_1^\prime, \ldots, t_{n-1}^\prime\}$, which implies that $f \in \spa_C\{t_1^\prime, \ldots, t_n^\prime\}$.
\end{proof}

The previous lemma gives us a direct way to determine whether or not a tower is S-primitive.

\begin{cor} \label{COR:Spri}
The tower $K_n$ is S-primitive if and only if for all $i \in [n]$, $t_i' \in K_{i-1}$ is simple and $t_1', \ldots, t_n'$ are $C$-linearly independent.
\end{cor}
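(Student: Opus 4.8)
The plan is to prove the corollary as a direct equivalence, using Lemma~\ref{LM:reduce0} for the hard direction and Definition~\ref{DEF:S-primitive} together with Lemma~\ref{LM:lc} for the easy one.

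\medskip

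\textbf{Easy direction ($\Rightarrow$).} Suppose $K_n$ is S-primitive. By Definition~\ref{DEF:S-primitive}, each $t_i'$ is simple in $K_{i-1}$, so the only thing to establish is the $C$-linear independence of $t_1', \ldots, t_n'$. Suppose $\sum_{i=1}^n c_i t_i' = 0$ for some $c_i \in C$, not all zero; let $m$ be the largest index with $c_m \neq 0$. Then $c_m t_m' = -\sum_{i<m} c_i t_i' \in K_{m-1}$, which already forces $t_m' \in K_{m-1}'$ after dividing by $c_m$ and rearranging — wait, more carefully: $t_m' = -\sum_{i<m}(c_i/c_m) t_i'$, and each $t_i'$ with $i<m$ lies in $K_{i-1}\subseteq K_{m-1}$. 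But $\sum_{i<m}(c_i/c_m)t_i'$ is a $C$-linear combination of derivatives of elements of $K_{m-1}$, hence lies in $K_{m-1}'$. So $t_m' \in K_{m-1}'$, contradicting $t_m'\notin K_{m-1}'$, which holds by Proposition~\ref{PROP:primitivemonomial} since $t_m$ is a primitive generator over $K_{m-1}$. Hence the $t_i'$ are $C$-linearly independent.

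\medskip

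\textbf{Hard direction ($\Leftarrow$).} Suppose that for all $i\in[n]$, $t_i'\in K_{i-1}$ is simple and $t_1',\ldots,t_n'$ are $C$-linearly independent. We must show $K_n$ is a primitive tower (i.e. each $t_i$ is a primitive generator over $K_{i-1}$) and then invoke Definition~\ref{DEF:S-primitive}. By hypothesis each $t_i$ is primitive over $K_{i-1}$, and since $t_i$ is transcendental over $K_{i-1}$ (as in the standing setup for towers in~\eqref{EQ:tower}), the only point requiring argument is $C_{K_i}=C_{K_{i-1}}$ for each $i$. By Proposition~\ref{PROP:primitivemonomial}, this is equivalent to $t_i'\notin K_{i-1}'$. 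I would argue by induction on $i$ that $K_0(t_1,\ldots,t_i)$ is S-primitive. For the inductive step, assume $K_{i-1}$ is an S-primitive tower; we need $t_i'\notin K_{i-1}'$. Suppose for contradiction $t_i' = g'$ for some $g\in K_{i-1}$. Since $t_i'$ is simple, Lemma~\ref{LM:proper} combined with the matryoshka decomposition lets us assume $g$ is simple as well (replace $g$ by a simple element equivalent mod $K_{i-1}'$, using that the $x$-proper/polynomial splitting at each level preserves the relation; more directly, applying Lemma~\ref{LM:reduce0} to $g'\in K_{i-1}'$ is not immediate since $g'$ need not be simple — this is where care is needed). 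The cleanest route: since $t_i' \in K_{i-1}^\prime$ and $t_i'$ is simple, apply Lemma~\ref{LM:reduce0} to the S-primitive tower $K_{i-1}$ with $f = t_i'$, concluding $t_i' \in \spa_C\{t_1',\ldots,t_{i-1}'\}$. But then $t_i'$ is a $C$-linear combination of $t_1',\ldots,t_{i-1}'$, contradicting the assumed $C$-linear independence of $t_1',\ldots,t_n'$. Hence $t_i'\notin K_{i-1}'$, so $t_i$ is a primitive generator over $K_{i-1}$ and $C_{K_i}=C_{K_{i-1}}$; together with $t_i'$ simple, this makes $K_0(t_1,\ldots,t_i)$ S-primitive, completing the induction. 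Taking $i=n$ gives that $K_n$ is S-primitive.

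\medskip

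\textbf{Main obstacle.} The delicate point is the base field: S-primitivity as defined requires $K_0=(C(x),d/dx)$, and "simple" at level $0$ means $x$-simple in the sense of Definition~\ref{DEF:simpleintower}. I need to make sure the induction is correctly anchored (the case $i=1$ uses that $t_1'\in K_0=C(x)$ is $x$-simple and $t_1'\neq 0$, so $t_1'\notin K_0'$ by Lemma~\ref{LM:lc} since a nonzero $x$-simple element cannot be a single nonzero constant multiple of $x'=1$ plus a derivative unless... actually $t_1' \in K_0 \cap K_1'$ would give $t_1' \equiv c x' = c \mod K_0'$, and $x$-simpleness forces $t_1' = c$, contradicting $C$-linear independence of the singleton $\{t_1'\}$ unless $t_1'=0$). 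The other subtlety is confirming that Lemma~\ref{LM:reduce0}'s hypotheses are genuinely met at each stage — namely that $K_{i-1}$ is already known to be S-primitive — which is exactly what the induction hypothesis supplies.
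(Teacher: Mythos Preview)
Your argument is correct and matches the paper's proof: both use Proposition~\ref{PROP:primitivemonomial} for the forward direction, and the converse proceeds by induction on~$i$ with Lemma~\ref{LM:reduce0} supplying the key implication that a simple element of $K_{i-1}'$ lies in $\spa_C\{t_1',\ldots,t_{i-1}'\}$. The only wobble is your base-case discussion: invoking Lemma~\ref{LM:lc} there is either circular (applied to $K_1$ it presupposes $t_1$ is already a primitive generator) or inapplicable (applied to $C(x)$ over $C$ it only sees polynomials in $x$, not $x$-proper elements); the fact you actually need---that a nonzero $x$-simple element of $C(x)$ is never a derivative---follows from Lemma~\ref{LM:int}~(i)--(ii) with $t=x$, which is precisely what the paper's one-line base case takes as evident.
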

\begin{proof}
If $K_n$ is an S-primitive tower, then $t_i'$ is simple for all $i\in[n]$. Furthermore, $t_i'\notin K_{i-1}'$ for all $i\in[n]$ by Proposition~\ref{PROP:primitivemonomial}. So $t_1',\ldots,t_n'$ are $C$-linearly independent.

We prove the converse by induction. If $n=1$, then a non-zero and simple $t_1'$ clearly implies that $K_1$ is S-primitive.  Suppose $n>1$ and the implication holds for~$n-1$. Assume that for all $i \in [n]$, $t_i' \in K_{i-1}$ is simple and that $t_1', \ldots, t_n'$ are $C$-linearly independent. By the induction hypothesis, $K_{n-1}$ is S-primitive. By Lemma~\ref{LM:reduce0}, $t_n^\prime \notin \spa_C\{t_1', \ldots, t_{n-1}'\}$ implies that $t_n' \notin K_{n-1}^\prime$. Thus, $t_n$ is a primitive generator over $K_{n-1}$ by Proposition~\ref{PROP:primitivemonomial}. Accordingly, $K_n$ is S-primitive.
\end{proof}

The following lemma gives a sufficient and necessary
condition in S-primitive towers for lowering an element  with respect to $\prec$ modulo the integrable space.

\begin{lemma} \label{LM:reduce}
Suppose that $K_n$ is an S-primitive tower. Let $M \in T_0$ with $\ind_n(M)=m$ and $a \in K_{m-1}$ be simple. Then $a M \in K_n^\prime+ K_n^{(\prec M)}$ if and only if  $a \in \spa_C\{t_1^\prime, \ldots, t_m^\prime\}.$
\end{lemma}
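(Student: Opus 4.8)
The plan is as follows. The \lq\lq if\rq\rq\ direction is immediate from part~(ii) of Lemma~\ref{LM:ibpreduce}: if $a\in\spa_C\{t_1',\ldots,t_m'\}$ then $aM\in K_n'+K_n^{(\prec M)}$, and simplicity of $a$ plays no role here. For the converse I would argue by induction on the height $n$, peeling off the top generator $t_n$.

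Suppose $aM=D'+R$ with $D\in K_n$ and $\hmm(R)\prec M$, and write $M=M''t_n^{d_n}$ with $d_n\ge 0$ and $M''$ a monomial in $t_1,\ldots,t_{n-1}$; note that $M''=1$ exactly when $m=n$, and otherwise $\ind_{n-1}(M'')=m$. Decompose $D=D_0+D_p$ and $R=R_0+R_p$ into their $t_n$-polynomial parts $D_0,R_0\in K_{n-1}[t_n]$ and $t_n$-proper parts $D_p,R_p$. Since $t_n$ is a primitive generator over $K_{n-1}$, derivation respects this splitting (the derivative of a $t_n$-proper element is $t_n$-proper, as used in the proof of Lemma~\ref{LM:proper}), and $aM$ is itself $t_n$-polynomial; so comparing the two parts gives $D_p'=-R_p$ and a polynomial identity $aM=D_0'+R_0$ over $K_{n-1}$. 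Because $\hmm(R_0)\preceq\hmm(R)\prec M$ and the head monomial of a polynomial in $t_n$ over $K_{n-1}$ is realized in its top $t_n$-power, one reads off that $R_0$ has $t_n$-degree at most $d_n$ and that its $t_n^{d_n}$-coefficient $R_{d_n}\in K_{n-1}$ satisfies $\hmm(R_{d_n})\prec M''$.

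Writing $D_0=\sum_j B_jt_n^j$ with $B_j\in K_{n-1}$ and using $t_n'\in K_{n-1}$, we get $D_0'=\sum_j\bigl(B_j'+(j+1)B_{j+1}t_n'\bigr)t_n^j$. Since the coefficient of $t_n^j$ in both $aM$ and $R_0$ vanishes for $j>d_n$, a short descent from the top degree of $D_0$, invoking $t_n'\notin K_{n-1}'$ (Proposition~\ref{PROP:primitivemonomial}), shows $\deg_{t_n}D_0\le d_n+1$ with the leading coefficient lying in $C$ when the degree is $d_n+1$. Comparing the coefficients of $t_n^{d_n}$ then gives $aM''=B_{d_n}'+c\,t_n'+R_{d_n}$ for some $B_{d_n}\in K_{n-1}$ and $c\in C$. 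If $m<n$, then $M''\succ 1$, and since $t_n'$ is simple its head monomial is $1$, so $c\,t_n'+R_{d_n}$ has head monomial $\prec M''$; hence $aM''\in K_{n-1}'+K_{n-1}^{(\prec M'')}$, and as $K_{n-1}$ is again S-primitive, $\ind_{n-1}(M'')=m$, and $a\in K_{m-1}$ is simple, the induction hypothesis yields $a\in\spa_C\{t_1',\ldots,t_m'\}$. If $m=n$ (which includes the base case $n=1$ and requires no induction), then $M''=1$ forces $R_{d_n}=0$, so $a-c\,t_n'=B_{d_n}'\in K_n'$; since $a$ and $t_n'$ are simple so is $a-c\,t_n'$, and Lemma~\ref{LM:reduce0} gives $a-c\,t_n'\in\spa_C\{t_1',\ldots,t_n'\}$, whence $a\in\spa_C\{t_1',\ldots,t_m'\}$.

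The technical heart is the peeling step: turning $\hmm(R)\prec M$ into the degree bound on $R_0$ and the bound $\hmm(R_{d_n})\prec M''$, together with the degree bound on $D_0$. These are precisely the points that use that $K_n$ is a primitive tower (giving $t_n'\notin K_{n-1}'$) and S-primitive (giving $t_n'$ simple, hence $\hmm(t_n')=1$), so that the stray term $c\,t_n'$ can be absorbed when $M''\succ 1$ or handled by Lemma~\ref{LM:reduce0} when $M''=1$; everything else is bookkeeping with the matryoshka decomposition.
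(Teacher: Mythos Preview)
Your proof is correct and follows essentially the same inductive route as the paper's: peel off $t_n$, pass to the coefficient of $t_n^{d_n}$, and then invoke the induction hypothesis (when $m<n$) or Lemma~\ref{LM:reduce0} (when $m=n$). The only differences are cosmetic---you re-derive Lemma~\ref{LM:lc} by hand via the degree descent on $D_0$ and fold the case $M=1$ into the $m=n$ branch, whereas the paper cites Lemmas~\ref{LM:proper}, \ref{LM:int} and~\ref{LM:lc} directly and disposes of $M=1$ before starting the induction.
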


\begin{proof}
The sufficiency follows from Lemma~\ref{LM:ibpreduce} (ii). Conversely, assume that $a M \in K_n^\prime+ K_n^{(\prec M)}$. If $M=1$, then $m=n$ and $a \in K_n^\prime$. By Lemma~\ref{LM:reduce0}, $a \in \spa_C\{t_1^\prime, \ldots, t_n^\prime\}.$ If $M \succ 1$ with $M = t_m^{d_m} \cdots t_n^{d_n}$ and $d_m >0$, we can proceed by induction on $n$.

For the base case, $a M\in K_1^\prime + K_1^{(\prec M)}$ implies that there exists a $t_1$-proper element $b \in K_1$ and $p \in K_0[t_1]$ with $\deg_{t_1}(p)<d_1$ such that $a M + b + p \in  K_1^\prime$.
By Lemma~\ref{LM:proper} and Lemma~\ref{LM:int} (i), $a M + p \in K_1^\prime$. Then Lemma~\ref{LM:lc} implies that $a - c t_1^\prime \in K_0^\prime$ for some $c \in C$.
Hence, $a = c t_1^\prime$, because~$a$ and $t_1^\prime$ are both $x$-simple.

\newpage

Assume that $n >1$ and the conclusion holds for $n-1$. Let $N = M/t_n^{d_n}$, which is a power product of $t_m, \ldots, t_{n-1}$.
Since $a M \in K_n^\prime+ K_n^{(\prec M)}$,
there is a $t_n$-proper element $b$ and $p \in K_{n-1}[t_n]$ with $\hmm(p) \prec M$ such that
$a N t_n^{d_n} + b + p \in K_n^\prime.$
By Lemma~\ref{LM:proper}, we can assume that $b$ is $t_n$-simple.
So, $b = 0$ by Lemma~\ref{LM:int} (i).
Let $p = q t_n^{d_n} + r$ such that $q \in K_{n-1}$ with $\hmm(q) \prec N$ and $r \in K_{n-1}[t_n]$ with $\deg_{t_n}(r) < d_n$.
Then we have $(a N + q) t_n^{d_n} + r \in K_n^\prime.$
By Lemma~\ref{LM:lc}, there exists $c \in C$ such that $a N + q - c t_n^\prime \in K_{n-1}^\prime$. Hence,
\begin{equation} \label{EQ:reduce1}
a N \equiv  c t_n^\prime \mod \big( K_{n-1}^\prime + K_{n-1}^{(\prec N)} \, \big).
\end{equation}
If $N=1$, then $m=n$ and $a \in K_n^\prime$.
By Lemma~\ref{LM:reduce0}, we have that $a \in \spa_C\{t_1^\prime, \ldots, t_n^\prime\}$. The lemma holds. If $N \succ 1$, then $\ind_{n-1}(N) = m <n$. By~\eqref{EQ:reduce1}, $aN \in K_{n-1}^\prime + K_{n-1}^{(\prec N)}$,
because $\hmm(c t_n^\prime)=1$. It follows from the the induction hypothesis that $a \in \spa_C\{t_1^\prime, \ldots, t_m^\prime\}$.
\end{proof}

We can now specify a remainder in S-primitive towers and prove that the algorithm to construct it will terminate.

\begin{prop} \label{PROP:remainder}
Let $K_n$ be an S-primitive tower, and $r \in K_n$ with $m = \ind_n(\hmm(r))$. Then $r$ is a  remainder if either $r=0$,
or $\pi_n(r)$ is $t_n$-simple and $\hc(r-\pi_n(r))$ is simple and is not a nonzero element of $\spa_C\{t_1^\prime, \ldots, t_m^\prime\}$.
\end{prop}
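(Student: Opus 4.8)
The goal is to show that such an $r$ is minimal in $R_r$, i.e., $r \preceq g$ for every $g \in K_n$ with $g \equiv r \bmod K_n^\prime$. The plan is to argue by contradiction: suppose $g \prec r$ for some $g \equiv r \bmod K_n^\prime$, and derive that $\hc(r - \pi_n(r))$ must, after all, be a nonzero element of $\spa_C\{t_1^\prime,\ldots,t_m^\prime\}$, contradicting the hypothesis. (The case $r=0$ is trivial since $0$ is the $\prec$-minimum.) Write $f = r - g \in K_n^\prime$, so $r = g + f$ with $f$ a derivative.

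First I would handle the $t_n$-degree: since $\pi_n(r)$ is $t_n$-simple, $r$ is $t_n$-proper-plus-polynomial with $t_n$-simple proper part, so $d_r$ (the $t_n$-denominator degree) equals the degree of the $t_n$-normal denominator of $\pi_n(r)$; using Lemma~\ref{LM:int} and Lemma~\ref{LM:proper} applied to $f = r - g \in K_n^\prime$, one shows $\hp_{t_n}$ of both sides agree, forcing $d_g \geq d_r$, and if $d_g = d_r$ then $\hp_{t_n}(r) = \hp_{t_n}(g)$, so that $\pi_n(g)$ has the same $t_n$-simple part and hence the same head monomial contribution from level $n$. This reduces us to the case $d_f = 0$, i.e., $f \in K_{n-1}$, and comparing head monomials: $\hmm(g) \prec \hmm(r)$, i.e., $g \in K_n^{(\prec M)}$ where $M = \hmm(r)$. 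Let $a = \hc(r - \pi_n(r)) = \hc_m(r)$ viewed in $K_{m-1}$ (it is simple and $t_{m-1}$-proper by the matryoshka structure), so $\pi_m(r) = aM + (\text{lower in } T_m)$, and more precisely $r - (\text{terms with head monomial} \prec M) = aM + \pi_n(r)$.

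The crux is then to extract, from $r \equiv g \bmod K_n^\prime$ with $g \prec r$, a statement of the form $aM \in K_n^\prime + K_n^{(\prec M)}$, to which Lemma~\ref{LM:reduce} applies and yields $a \in \spa_C\{t_1^\prime,\ldots,t_m^\prime\}$; since $a = \hc(r-\pi_n(r)) \neq 0$ (as $M = \hmm(r) \neq 1$ forces $\pi_m(r) \neq 0$), this is exactly the forbidden case, giving the contradiction. To get there, I would subtract off from $r$ the reducible pieces: $\pi_n(r)$ is $t_n$-simple hence lies in $\hp_{t_n}$, but more usefully, all terms of $r$ other than $aM$ that share head monomial $M$ — there are none by definition of $\hc$ only summing the matching projections, but different $\hmm_i(r)$ could tie; here the hypothesis is that $\hc(r-\pi_n(r))$ is simple, and one must be careful that the tie-set $\bI_{r-\pi_n(r)}$ might involve several levels. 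The cleanest route: $r - \pi_n(r) - aM \in K_n^{(\prec M)}$ is not quite right if several projections contribute to $\hmm(r)$, so instead use that $a = \sum_{i \in \bI} \hc_i$ and that the corresponding monomial $M$ at each such level $i$ has indicator $m$; combining, $r - \pi_n(r) \equiv aM \bmod K_n^{(\prec M)}$, hence $aM \equiv g - \pi_n(r) \bmod K_n^{(\prec M)} \in K_n^{(\prec M)}$ modulo $K_n^\prime$ (using $g \prec r$ so $g \in K_n^{(\prec M)}$, and $\pi_n(r)$ is $t_n$-simple so by Lemma~\ref{LM:proper} it lies in $K_n^\prime$ plus something $\prec M$ — actually $\pi_n(r) - \hp_{t_n}(\pi_n(r)) \in K_n^\prime$ and $\pi_n(r)$ is already its own Hermitian part, so $\pi_n(r)$'s contribution needs the observation $\hmm_n(r) \preceq \hmm(r) = M$).

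I expect the \textbf{main obstacle} to be the bookkeeping around ties in the head monomial, i.e., handling $\bI_{r-\pi_n(r)}$ when more than one projection $\pi_i(r)$ attains the head monomial $M$: one needs that all these contributions can be gathered into a single coefficient $a \in K_{m-1}$ times $M$ modulo strictly lower terms, and that simplicity of $a$ (hypothesis) plus Lemma~\ref{LM:reduce} still closes the argument. Separating the $t_n$-denominator-degree reduction (handled by Hermite/Lemma~\ref{LM:int}) from the head-monomial reduction (handled by Lemma~\ref{LM:reduce}) is the structural skeleton; the delicate part is verifying that the quantity to which Lemma~\ref{LM:reduce} is applied is genuinely $aM$ with $a = \hc(r-\pi_n(r))$ simple, and that $a \neq 0$ whenever $\hmm(r) \neq 1$.
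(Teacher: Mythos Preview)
Your approach is the paper's: compare $t_n$-denominator degrees via Hermitian parts, then argue by contradiction on head monomials using Lemma~\ref{LM:reduce}. Your anticipated obstacle---the tie-set $\bI$---dissolves once you notice that $\hc$ is defined precisely so that $h - \hc(h)\cdot\hmm(h) \in K_n^{(\prec\, \hmm(h))}$ automatically: the matryoshka decomposition of $a=\hc(h)$ is $\sum_{i\in\bI_h}\hc_i(h)$ with each summand $t_i$-proper and $M \in T_i$, so $\hc_i(h)\,M \in P_i$, and subtracting $aM$ removes the $M$-term from every $\pi_i(h)$ at once. Applying this with $h = r - g$ (where $\hmm(g)\prec M$ gives $\hc(r-g)=\hc(r)$, and $\hc(r)=\hc(r-\pi_n(r))=a$ when $M\succ 1$ since $\hmm_n(r)\preceq 1$) yields $aM \in K_n' + K_n^{(\prec M)}$ in one line, with no bookkeeping.

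A few smaller slips to clean up: $a$ is not $\hc_m(r)$ but the sum over $\bI_{r-\pi_n(r)} \subseteq \{0,\ldots,m-1\}$, and your justification ``$M\neq 1$ forces $\pi_m(r)\neq 0$'' is off for the same reason (the contributing projections live at levels strictly below $m$); nonvanishing of $a$ follows instead from the uniqueness of matryoshka components. The ``reduce to $f\in K_{n-1}$'' step is neither needed nor quite achievable as stated; the paper works directly with $r - g \in K_n'$. Finally, you should treat $M=1$ separately: there $\hmm(g)\prec 1$ forces $g=0$, hence $r\in K_n'$, so $\pi_n(r)=\hp_{t_n}(r)=0$ and $a=r\in K_{n-1}$ is simple, whence Lemma~\ref{LM:reduce0} gives $a\in\spa_C\{t_1',\ldots,t_n'\}$, contradicting the hypothesis unless $r=0$.
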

\begin{proof}
Let $f \in R_r$  as defined in \eqref{EQ:equiv}.
As $\pi_n(r)$ is $t_n$-simple, we have $\hp_{t_n}(f) = \pi_n(r)$ by Lemma~\ref{LM:int} (ii) and (iii). Then the denominator of $\pi_n(r)$, which is exactly the denominator of $r$ as a polynomial in $K_{n-1}[t_n]$, divides the denominator of $f$ by Theorem 5.3.1 in~\cite{BronsteinBook}\footnotemark[1].

We further need to  show that $\hmm(r) \preceq \hmm(f)$.  Suppose the contrary. Then $r \neq 0$. Let $M = \hmm(r)$ and $a = \hc(r-\pi_n(r))$.

If $M=1$, then $m = n$, $a=r-\pi_n(r)$, and $f = 0$, which implies that $r \in  K_n^\prime$. Then $\pi_n(r)=0$ by Lemma~\ref{LM:int}~(i). So, $a \in K_{n-1} \cap K_n^\prime$. By Lemma~\ref{LM:reduce0}, we have that $a$ belongs to $\spa_C\{t_1^\prime, \ldots, t_n^\prime\}$. Thus, $a=r=0$, a contradiction.

Assume that $M \succ 1$.  Since $M\succ \hmm(f)$, we have that $\hmm(r-f)=M \text{ and } \hc(r-f)=\hc(r).$ Then $\hc(r-f)=a$ because $M\succ 1$ and $\hmm(\pi_n(r))=1$. From $r-f \in K_n^\prime$, we see that $a \, M \in K_n^\prime+ K_n^{(\prec M)}.$ By Lemma~\ref{LM:reduce}, $a$ belongs to  $\spa_C\{t_1^\prime, \ldots, t_m^\prime\}$,
which implies that $a=0$. Then $r=\pi_n(r)$ and $M=1$, a contradiction.
\end{proof}

\begin{theorem}\label{TH:existence}
Let $K_n$ be an S-primitive tower and let $f \in K_n$. Then one can construct a remainder of $f$ with the properties described in Prop.~\ref{PROP:remainder} in a finite number of steps.
\end{theorem}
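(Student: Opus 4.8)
\emph{Proof plan.} The plan is to give an explicit recursive reduction algorithm, organised by induction on the height $n$ of the tower, and to read off its termination and correctness from the lemmas already proved. When $n=0$ we have $K_0=C(x)$, and Hermite--Ostrogradsky reduction returns an $x$-simple $r\equiv f\bmod K_0'$, which is a remainder by Proposition~\ref{PROP:remainder} applied with $n=0$. For $n>0$, assume that remainders can be constructed in every shorter tower and let $f\in K_n=K_{n-1}(t_n)$. First I would Hermite-reduce with respect to $t_n$ and invoke Lemma~\ref{LM:proper}, replacing $f$ modulo $K_n'$ by $\hp_{t_n}(f)+p$ with $p=\sum_{j=0}^{d}c_j t_n^j\in K_{n-1}[t_n]$; since $p$ is a polynomial in $t_n$ this makes $\pi_n$ equal to the $t_n$-simple element $\hp_{t_n}(f)$, and because $t_n'\in K_{n-1}$ by S-primitivity every subsequent operation only ever adds further elements of $K_{n-1}[t_n]$, so $\pi_n$ is never disturbed. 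Next I would process the coefficients $c_d,\dots,c_0$ in this order: apply the induction hypothesis to obtain a remainder $\hat c_j\in K_{n-1}$ of $c_j$, write $c_j=\hat c_j+b_j'$ with $b_j\in K_{n-1}$, and rewrite $b_j' t_n^j=(b_j t_n^j)'-j\,b_j t_n'\,t_n^{j-1}$; the cross-term $-j\,b_j t_n'\,t_n^{j-1}$ lies in $K_{n-1}[t_n]$ with strictly smaller degree in $t_n$ and is added to $c_{j-1}$ before level $j-1$ is processed. After this single downward sweep we have $f\equiv \hp_{t_n}(f)+\sum_{j}\hat c_j t_n^j\bmod K_n'$.

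Let $J$ be the largest index with $\hat c_J\ne 0$ and put $M=\hmm\!\big(\sum_j\hat c_j t_n^j\big)$, $a=\hc\!\big(\sum_j\hat c_j t_n^j\big)$, and $m=\ind_n(M)$. Because every monomial occurring in $\hat c_j t_n^j$ has $t_n$-degree exactly $j$, one gets $M=\hmm(\hat c_J)\,t_n^J$ and $a$ equal to the head coefficient of $\hat c_J$ in $K_{n-1}$, which is simple by the induction hypothesis. If $a$ happens to be a nonzero element of $\spa_C\{t_1',\dots,t_m'\}$ — a case that can occur only when $\hmm(\hat c_J)=1$, where it forces $m=n$ and $a=c t_n'+v$ with $c\in C\setminus\{0\}$ and $v\in\spa_C\{t_1',\dots,t_{n-1}'\}$ — then I would delete the term $a\,t_n^J$ via Lemma~\ref{LM:ibpreduce}(ii), which rewrites it modulo $K_n'+K_n^{(\prec M)}$, and resume the sweep from the affected level downward. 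This outer loop terminates: each deletion strictly lowers the current element with respect to $\prec$, since it removes the head monomial, introduces only terms in $K_n^{(\prec M)}$, and does not change the degree of the denominator in $t_n$ (which stays that of $\hp_{t_n}(f)$); and $\prec$ is Noetherian on $K_n$ by the observation following Definition~\ref{DEF:order}. Each sweep is itself finite because the cross-terms only move mass to strictly smaller powers of $t_n$. When the loop halts, the current $r$ is either $0$ or satisfies verbatim the hypotheses of Proposition~\ref{PROP:remainder} (its $n$-th projection is the $t_n$-simple element $\hp_{t_n}(f)$, and the head coefficient of $r-\pi_n(r)$ is simple and not a nonzero element of $\spa_C\{t_1',\dots,t_m'\}$), so $r$ is a remainder of $f$, which proves the theorem.

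The step I expect to be the main obstacle is the bookkeeping that links the induction hypothesis to the matryoshka structure: one must check carefully that a $K_{n-1}$-remainder $\hat c_J$, after multiplication by $t_n^J$ and reinterpretation inside $K_n$, really does inherit the two properties demanded by Proposition~\ref{PROP:remainder}. Concretely, one needs that the head monomial of $\hat c_J t_n^J$ in $K_n$ equals $\hmm(\hat c_J)\,t_n^J$ with unchanged head coefficient, that $\ind_n(\hmm(\hat c_J)\,t_n^J)=\ind_{n-1}(\hmm(\hat c_J))$ whenever the latter index is $<n$ (so the forbidden span $\spa_C\{t_1',\dots,t_m'\}$ is literally the same in both towers), and that the exceptional case $\hmm(\hat c_J)=1$ is disposed of exactly by the cleanup loop, where the $C$-linear independence of $t_1',\dots,t_n'$ from Corollary~\ref{COR:Spri} guarantees the shape $a=c t_n'+v$ with $c\ne 0$ and makes Lemma~\ref{LM:ibpreduce}(ii) applicable. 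Once these compatibility facts are in place, termination and correctness follow directly from Lemma~\ref{LM:ibpreduce}, Lemma~\ref{LM:reduce}, Proposition~\ref{PROP:remainder}, and the Noetherianity of $\prec$.
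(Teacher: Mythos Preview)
Your argument is correct, but its organisation is genuinely different from the paper's. The paper runs a single Noetherian induction directly on the head monomial $M=\hmm(f-\pi_n(f))$: at each step it touches only the head coefficient $a\in K_{m-1}$, replaces each projection $\pi_i(a)$ by a $t_i$-simple element using Lemma~\ref{LM:proper} and Lemma~\ref{LM:ibpreduce}(i), and then either freezes the resulting simple term $bM$ (if $b\notin\spa_C\{t_1',\dots,t_m'\}$) or eliminates it via Lemma~\ref{LM:ibpreduce}(ii) and recurses on what remains in $K_n^{(\prec M)}$. There is no recursion into a shorter tower and no separate cleanup loop; both ``freeze'' and ``eliminate'' are handled uniformly at the head.

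Your approach instead inducts on the tower height: you replace each polynomial coefficient $c_j$ by a full $K_{n-1}$-remainder $\hat c_j$, and then observe that the only obstruction to Proposition~\ref{PROP:remainder} in $K_n$ is the case $\hmm(\hat c_J)=1$ with $\hat c_J\in\spa_C\{t_1',\dots,t_n'\}$, which you treat with a cleanup loop. The compatibility facts you flag (that $\hc^{K_n}(\sum_j\hat c_j t_n^j)=\hc^{K_{n-1}}(\hat c_J)$, that the indicators agree when $\hmm(\hat c_J)\succ1$, and that $c\ne0$ in the exceptional case by Proposition~\ref{PROP:remainder} applied to $\hat c_J$ in $K_{n-1}$) all check out; one should also note that $\hc^{K_{n-1}}(\hat c_J)$ being simple is not literally the induction hypothesis but follows from it by a short case split on whether $\hmm(\hat c_J)\succ1$. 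Your route is heavier computationally (each coefficient triggers a full recursive reduction in $K_{n-1}$, and the cleanup loop then re-sweeps), while the paper's proof simplifies only the head coefficient at each step; on the other hand, your organisation maps directly onto a recursive implementation on the tower structure.
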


\begin{proof}
By Lemma~\ref{LM:proper}, $\pi_n(f) \equiv \hp_{t_n}(f) \mod K_n^\prime$. Then
\begin{equation} \label{EQ:first}
f \equiv \hp_{t_n}(f) + (f-\pi_n(f)) \mod K_n^\prime.
\end{equation}
The $n$-th projection of the right-hand side of the congruence is equal to $\hp_{t_n}(f)$, which is $t_n$-simple.

Let $M = \hmm(f -\pi_n(f))$.
We proceed by a Noetherian induction on $M$ with respect to $\prec$. If $M=0$, then $f = \pi_n(f)$. By~\eqref{EQ:first} and Proposition~\ref{PROP:remainder}, $\hp_{t_n}(f)$ is a remainder of $f$.

Assume that $M \neq  0$, and for any $g \in K_n$ with $\hmm(g) \prec M$, there is a remainder $\tilde r$ of $g$ as described in Proposition~\ref{PROP:remainder}.

Let $a = \hc(f -\pi_n(f))$ and $m = \ind_n(M)$.
Since $a \in K_{m-1}$,
its $j$-th projection is equal to zero for each $j \in \{m, \ldots, n\}$.
By Lemma~\ref{LM:proper}, $\pi_i(a) \equiv h_i \mod K_i^\prime$
for some $t_i$-simple elements $h_i \in K_i$ for all $i \in [m-1]_0$ with $t_0=x$. By Lemma~\ref{LM:ibpreduce} (i),
\begin{equation} \label{EQ:ibp}
f-\pi_n(f) \equiv b  M \mod (K_n^\prime+ K_n^{(\prec M)}),
\end{equation}
where $b = \sum_{i=0}^{m-1} h_i$. Note that $b$ is simple by Definition~\ref{DEF:simpleintower}.

If $b \in \spa_C\{t_1^\prime, \ldots, t_m^\prime\}$, then $b M$ is in $K_n^\prime+ K_n^{(\prec M)}$  by Lemma~\ref{LM:ibpreduce} (ii). So $f - \pi_n(f) \equiv g \mod K_n^\prime $ for some $g$ in $K_n^{(\prec M)}$ by~\eqref{EQ:ibp}. Accordingly, $g$ has a remainder~$\tilde r$ as described in Proposition~\ref{PROP:remainder} by the induction hypothesis. It follows that $\hp_{t_n}(f) + \tilde r$ is a remainder of $f$.

Assume that $b  \notin  \spa_C\{t_1^\prime, \ldots, t_m^\prime\}$.  It follows from~\eqref{EQ:first} and~\eqref{EQ:ibp} that
$f \equiv \hp_{t_n}(f) + b M + g \mod K_n^\prime$
for some $g$ in $K_n^{(\prec M)}$. Moreover, we may further assume that $\pi_n(g)$ is $t_n$-simple by Lemma~\ref{LM:proper}. The right-hand side of the above congruence is a remainder as described in Proposition~\ref{PROP:remainder}, because $b$ is the head coefficient of $b M +(g- \pi_n(g))$.
\end{proof}

We now present an algorithm to decompose an element in an S-primitive tower over $K_0=(C(x),d/dx)$ into a sum of a derivative and a remainder. The algorithm is a slight refinement of the proof of the above theorem. We refer the reader to the online supplementary material \footnote{\url{https://wongey.github.io/add-decomp-sprimitive/}} for the implementation.

\vspace{0.3cm}

\noindent\fbox{%
    \parbox{0.47\textwidth}{%
\textsc{AddDecompInField}$\big(\, f,\, K_0(\bar t)\, \big)$

\noindent \textit{Input:} An S-primitive tower $K_0(\bar t)$, described as a list $$\{x, \{t_1, \ldots, t_n\}, \{t_1', \ldots, t_n'\}\},$$ s.t. $t_i^\prime \in K_{i-1}$ is simple for all $i \in [n]$, and $f \in K_n$.

\noindent\textit{Output:} Two elements $g, r \in K_n$ such that $f = g^\prime + r$ and $r$ \\ \qquad satisfies the conditions in Proposition~\ref{PROP:remainder}.
\begin{enumerate}
\item If $f=0$, then return $(0,0)$.
\item Initialize: $M\leftarrow\hmm(f)$, $a\leftarrow\hc(f)$, $m\leftarrow\ind_n(M)$, \\ $d\leftarrow\dg_{t_m}(M)$, $B \leftarrow 0$, $H \leftarrow 0$, $\tilde c \leftarrow 0$.
\item Let $a=\sum_{i=0}^m a_i$ be the matryoshka decomposition.
\item Reduction: For all $i$ from $0$ to $m$, compute $b_i,h_i\in K_i$ s.t. $a_i=b_i'+h_i$, where $h_i$ is $t_i$-simple. Decide whether $\exists\, c_1,\ldots,c_m\in C$ s.t. $h_i=\sum_{j=1}^m c_j t_j'$.
\begin{itemize}
 \item[] Yes: $B\leftarrow B+b_i+\sum_{j=1}^{m-1}c_jt_j$ and $\tilde c\leftarrow\tilde c+c_m;$
 \item[] No: $B\leftarrow B+b_i$ and $H \leftarrow H+h_i$.
\end{itemize}
\item Lower term: $\ell\leftarrow f-aM-BM'-\frac{\tilde c}{d+1}\cdot t_m^{d+1}\cdot \big(M/t_m^d\big)'$\\
Recursion:
$\{\tilde{g},\tilde{r}\}\leftarrow$\textsc{AddDecompInField}$\big(\, \ell,\, K_0(\bar t)\, \big)$
\item Return $g=B M+\frac{\tilde c}{d+1}\cdot t_m \cdot M + \tilde{g}$ and $r=H\cdot M+\tilde{r}$.
\end{enumerate}
    }%
}

\begin{ex} \label{EX:infield}
Find an additive decomposition for
$$f=\frac{1}{\log(x) \Li(x)}+\frac{\Li(x)-2 x \log(x)}{(\log(x))^2}+\log (\log (x)).$$
Then $f$ belongs to the S-primitive tower $$K_3=C(x)(\underbrace{\log(x)}_{t_1},\underbrace{\Li(x)}_{t_2},\underbrace{\log(\log(x))}_{t_3}),$$
 and we can write $f=1/(t_1 t_2)+(t_2-2x t_1)/t_1^2+t_3 \in K_3$. By the above algorithm, we have that
\begin{equation} \label{EQ:ad}
f = \bigg(x t_3+\frac{t_2^2}{2}-t_2-\frac{xt_2+x^2}{t_1}\bigg)^\prime + \underbrace{ \frac{1}{t_1 t_2}}_r.
\end{equation}
The nonzero remainder $r$ implies that $f$ has no integral in~$K_3$.
\end{ex}

An element $f \in K$ is said to have an {\em elementary integral over~$K$} if there exists an elementary extension~$E$ of~$K$ and an element $g$ of $E$ such that $f = g^\prime$ (see \cite[Definition 5.1.4]{BronsteinBook}\footnotemark[1]). We can use the remainder from Theorem \ref{TH:existence} to determine whether or not a function has an elementary integral.

\begin{theorem} \label{TH:ele}
Let $K_n$ be S-primitive and $C$ be algebraically closed.
Let $f \in K_n$ have a remainder $r$ as described in Proposition~\ref{PROP:remainder}.
Then $f$ has an elementary integral over~$K_n$ if and only if
\begin{equation}\label{EQ:elemint}
r \in \spa_C\{t_1^\prime, \ldots, t_n^\prime\} + \spa_C \{g^\prime/g \mid g \in K_n\}.
\end{equation}
\end{theorem}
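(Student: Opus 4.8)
The strategy is to reduce to the classical Liouville criterion for elementary integrability by exploiting that the remainder $r$ is (after dropping its $t_n$-simple part, which lies in $K_{n-1}$) supported on monomials whose coefficients are simple and, by Proposition~\ref{PROP:remainder}, ``irreducible'' in the sense that their head coefficients are never nonzero elements of the relevant span $\spa_C\{t_1',\ldots,t_m'\}$. The point of the two directions:

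\emph{Sufficiency.} If $r$ lies in the right-hand side of~\eqref{EQ:elemint}, then each summand has an obvious elementary antiderivative: $t_i'$ integrates to $t_i\in K_n$, and each $g'/g$ integrates to $\log g$, which lives in the elementary extension $K_n(\log g)$. Summing finitely many of these and appealing to the fact that $f\equiv r\bmod K_n'$ (so $f=r+h'$ for some $h\in K_n$) shows $f$ has an elementary integral over $K_n$.

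\emph{Necessity.} This is the hard direction and the main obstacle. Assume $f$ has an elementary integral; since $f\equiv r\bmod K_n'$, so does $r$. By Liouville's Theorem (the version in \cite[Theorem~5.5.3]{BronsteinBook}, applicable since $C$ is algebraically closed and $C_{K_n}=C$), there exist $c_1,\ldots,c_k\in C$, $g_1,\ldots,g_k\in K_n^\times$, and $v\in K_n$ with
\[
r = v' + \sum_{j=1}^k c_j \frac{g_j'}{g_j}.
\]
Using Lemma~\ref{LM:logder}(ii) repeatedly (peeling off one generator at a time, top down), one writes $\sum_j c_j g_j'/g_j = \ell' - \ell$ wait --- more precisely, one splits $\sum_j c_j g_j'/g_j$ into a logarithmic-derivative-like part that is $t_n$-simple plus lower-order contributions, so that after absorbing into $v'$ we may assume the logarithmic part $w:=\sum_j c_j g_j'/g_j$ is simple. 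Then $r - w = v'\in K_n'$ has the form ``simple plus a polynomial correction,'' and one applies the matryoshka/Hermite machinery: the $t_n$-simple parts of $r$ and $w$ agree with $\hp_{t_n}$ of an element of $K_n'$, forcing relations between $\pi_n(r)$, $\pi_n(w)$ and pushing the analysis down to $K_{n-1}$. The core estimate is then to compare head monomials: if $r\notin\spa_C\{t_1',\ldots,t_n'\} + \spa_C\{g'/g\}$, pick the highest monomial $M$ on which $r$ (minus its logarithmic part) survives, let $a=\hc$ of that piece with indicator $m$; then $aM\in K_n' + K_n^{(\prec M)}$ after subtracting the logarithmic contribution at that level, and Lemma~\ref{LM:reduce} forces $a\in\spa_C\{t_1',\ldots,t_m'\}$, contradicting the defining property of the remainder $r$ from Proposition~\ref{PROP:remainder} (which was chosen precisely so that its head coefficient is \emph{not} a nonzero such element). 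The delicate bookkeeping is that ``logarithmic derivative in $K_n$'' need not be simple, so one must first normalize $w$ using Lemma~\ref{LM:logder}, and one must track that the normalization does not disturb the head monomial $M$.

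**Where the difficulty concentrates.** The main obstacle is the necessity direction, specifically controlling the logarithmic part $\sum_j c_j g_j'/g_j$ from Liouville's theorem: a priori its head monomial could be arbitrarily high, and one needs to show that modulo $K_n'$ (and modulo the permitted span of logarithmic derivatives) it can be brought down without introducing new high-order terms that would spoil the comparison with $r$. Lemma~\ref{LM:logder}(ii) is the key tool here --- it lets one peel the logarithmic derivative apart generator by generator into a $t_i$-simple piece plus a logarithmic derivative in the smaller field --- but marrying this inductive peeling with the Noetherian induction on $\prec$ used throughout Section~\ref{SECT:ad}, while keeping the head-coefficient obstruction from Proposition~\ref{PROP:remainder} in play, is the technically substantive part of the argument. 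Once that normalization is in place, Lemma~\ref{LM:reduce} and the uniqueness of Hermitian parts (Lemma~\ref{LM:int}) close the argument routinely.
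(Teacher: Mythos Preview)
Your approach is essentially correct but substantially more complicated than the paper's proof, and the difficulty you flag as the ``main obstacle'' is in fact not an obstacle at all.

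First, the claim that a ``logarithmic derivative in $K_n$ need not be simple'' is false: iterating Lemma~\ref{LM:logder}(ii) from $t_n$ down to $t_1$ shows that every logarithmic derivative in $K_n$, and hence every $C$-linear combination $h=\sum_j c_j g_j'/g_j$, is automatically simple in the sense of Definition~\ref{DEF:simpleintower}. No normalization or absorption into $v'$ is needed, and in particular $\hmm(h)=1$ immediately. Your worry that ``a priori its head monomial could be arbitrarily high'' is therefore misplaced, and the ``delicate bookkeeping'' you anticipate never materializes.

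Second, once $\hmm(h)=1$ is known, the paper sidesteps your contradiction argument via Lemma~\ref{LM:reduce} entirely. It uses instead the \emph{minimality} of $r$ (Definition~\ref{DEF:remainder}): since $h\in R_f$ and $r$ is a minimal element of $R_f$, one has $r\preceq h$; combined with $\pi_n(r)=\pi_n(h)$ (both are $t_n$-simple and congruent modulo $K_n'$, so Lemma~\ref{LM:int} applies), this forces $\hmm(r)\preceq 1$. Then the structure in Proposition~\ref{PROP:remainder} makes $r$ itself simple, so $r-h\in K_n'$ is simple, and a single application of Lemma~\ref{LM:reduce0} (not Lemma~\ref{LM:reduce}) gives $r-h\in\spa_C\{t_1',\ldots,t_n'\}$. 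No Noetherian induction, no head-monomial chase.

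Your route---arguing that if $\hmm(r)\succ 1$ then $\hc(r-\pi_n(r))$ would land in $\spa_C\{t_1',\ldots,t_m'\}$ by Lemma~\ref{LM:reduce}, contradicting the hypothesis of Proposition~\ref{PROP:remainder}---does go through, but it essentially re-derives the minimality statement of Proposition~\ref{PROP:remainder} inside this proof, whereas the paper invokes minimality directly as a black box.
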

\begin{proof}
The sufficiency is obvious. Conversely, there exists an $h \in  \spa_C\{g^\prime/g \mid g \in K_n\}$ such that $f \equiv h\mod K_n^\prime$ by Liouville's Theorem~\cite[Theorem 5.5.2]{BronsteinBook}\footnotemark[1]. Since $r$ is a remainder of $f$, we have that $h \equiv  r \mod K_n^\prime$. By Proposition~\ref{PROP:remainder} and Lemma~\ref{LM:logder}, we know that $\pi_n(r)$ and $\pi_n(h)$ are $t_n$-simple, which, together with Lemma~\ref{LM:int} (ii) and (iii),  implies that $\pi_n(r) = \pi_n(h)$. Since $\hmm(h)=1$, we have that $\hmm(r) \preceq 1$ by Definition~\ref{DEF:remainder}. If $\hmm(r) = 0$, then $r = 0$. Otherwise, $\hmm(r) = 1$. By Proposition~\ref{PROP:remainder}, $r$ is simple. Since $h$ is simple, $r-h \in K_n^\prime$ is also simple. By Lemma~\ref{LM:reduce0}, $r-h \in \spa_C\{t_1^\prime, \ldots, t_n^\prime\}$, which implies \eqref{EQ:elemint}.
\end{proof}

\begin{ex}
Let us reconsider the function $f$ and the tower $K_3$ in Example~\ref{EX:infield} under the assumption that $C$ is algebraically closed.
The remainder is $r = t_2^\prime/t_2$. By Theorem~\ref{TH:ele},
$f$ has an elementary integral over $K_3$. It follows from~\eqref{EQ:ad} that
\begin{align*}
\int f \, dx = &\ x \log(\log(x)) + \frac{\Li(x)^2}{2} - \Li(x)-\frac{x\Li(x)+x^2}{\log(x)} \\
            &\ + \log(\Li(x)).
\end{align*}
The Mathematica implementation by Raab based on work in~\cite{Raab2012} computes the same result.
But the ``int(\,)'' command in Maple and the ``Integrate[\, ]'' command in Mathematica both leave the integral unevaluated.
\end{ex}

\section{Logarithmic Towers}
\label{SECT:wglt}

A repeated use of Lemma~\ref{LM:logder} (ii) easily reveals a logarithmic tower to be S-primitive. Hence, \textsc{AddDecompInField} can be applied to all logarithmic towers. In this section, we show that a logarithmic tower can be differentially embedded into a logarithmic tower that we will term ``well-generated'' (see Definition \ref{DEF:wg}) with the aid of the logarithmic derivative identity and the matryoshka decomposition. An element in the latter tower may have a ``finer'' remainder. The logarithmic derivative identity is actually a differential version of logarithmic product and quotient rules, while the matryoshka decomposition
guides us how to apply the rules appropriately.

\begin{ex}
Consider the following function in $x$: $$f=\frac{\log((x+1)\log(x))}{x\log(x)}.$$
For this function, there are two possible ways to construct the tower over $\Q(x)$ containing $f$:
\begin{itemize}
\item[(i)] $t_1=\log(x), t_2=\log((x+1)\ t_1); \ f=\frac{t_2}{xt_1}$,
\item[(ii)] $u_1=\log(x), u_2=\log(x+1), u_3=\log(u_1); \  f=\frac{u_2+u_3}{x u_1}$.
\end{itemize}
In the first tower, $f$ is already a remainder by Proposition~\ref{PROP:remainder}. In the second tower, \textsc{AddDecompInField} computes a remainder $u_2/(xu_1)$ that is lower than $f$. This is because we can decompose
$\log((x+1)\log(x))$ as a sum of $\log(x+1)$ and $\log(\log(x))$ in the second tower, but neither of the two summands is contained in the first.
\end{ex}

We can use the matryoshka decomposition to describe a primitive tower in terms of a matrix, which will be used to rearrange our generators in an order that would yield a finer remainder by applying \textsc{AddDecompInField}.

\begin{defn}
Let $K_0(\bar{t})$ be primitive. The $n \times n$ matrix $$A = \left(  \pi_i(t_j^\prime)  \right)_{0 \le i \le n-1, 1 \le j \le n}$$ is called the {\em matrix associated} to $K_0(\bar{t})$.
\end{defn}

\begin{figure}[ht]
 \vspace{-0.5cm}
\centering
\[
\renewcommand\arraystretch{1.3}
\begin{blockarray}{crcccccc}
&&& t_1' & t_2' & \cdots & t_n' &  \\
&&& \downarrow & \downarrow & & \downarrow \\
\begin{block}{rc(c@{}cccc@{}c)}
P_0  & \to && \star & \star & \cdots & \star &\vphantom{\smash[b]{\bigg|}} \\
P_1 & \to &&  & \star & \cdots & \star & \\
\vdots  &&&  &  & \ddots & \vdots & \\
P_{n-1} & \to &&  & & & \star & \\
\end{block}
\end{blockarray}
\]
 \vspace{-0.5cm}

\caption{A labeled associated matrix of a primitive tower. The $\star$ represents a possibly nonzero element.}
\Description[short]{long}
\label{FIG:assmat0}
\end{figure}

The associated matrix records all information about the derivation on $K_0(\bar t)$, because $\pi_n(t_1^\prime) = \cdots = \pi_n(t_n^\prime) = 0.$  Since $t_j^\prime \in K_{j-1}$ for all $j \in [n]$,  the associated matrix $A$ is in upper triangular form as in Figure~\ref{FIG:assmat0}. Furthermore, if $K_0(\bar t)$ is a logarithmic tower, then the entries of $A$ are all logarithmic derivatives by Lemma~\ref{LM:logder} (ii).

For the following discussion, we will invoke the superscript notation to distinguish between different sets of generators (for example, $\pi_i^{\bar{t}}$ for projections in $K_0(\bar{t})$).

\begin{defn}\label{DEF:si}
 Let $K_0(\bar t)$ be primitive and $f \in K_n \setminus \{0\}$. The {\em significant index} of $f$ is $$\si^{\bar{t}}(f):=\max\{i\in[n]_0 \mid \pi_i(f)\neq 0\}.$$ The vector
$$\sv(\bar t):=\left( \si^{\bar t}(t_1^\prime), \ldots, \si^{\bar t}(t_n^\prime)\right)$$ is called the {\em significant vector} of $K_0(\bar t)$.
Suppose $\sv(\bar t)$ is equal to $(k_1, \ldots, k_n).$ The sequence
$$\scc(\bar t):=\left(\pi^{\bar t}_{k_1}(t_1^\prime), \ldots, \pi^{\bar t}_{k_n}(t_n^\prime)\right)$$ is called the {\em the significant component sequence} of $K_0(\bar{t})$.
\end{defn}

\newpage

The significant vector and significant component sequence are unique with respect to the generators by the matryoshka decomposition.

\begin{ex}
Consider the field $$C(x)\left(\log(x),\log(\log(x)),\log((x+1)\log(x))\right).$$ We set
$t_1=\log(x), t_2=\log(t_1), \text{and } t_3=\log((x+1)\, t_1).$ Then $C(x)(t_1, t_2, t_3)$ is a logarithmic tower whose significant vector is equal to
$(0,1,1)$ and whose significant component sequence is \\ $(1/x, 1/(xt_1), 1/(x t_1))$.

\end{ex}

\begin{defn}\label{DEF:wg}
A logarithmic tower $K_0(\bar t)$ is said to be {\em well-generated} if
\begin{itemize}
\item[(CLI)] $\scc(\bar t)$ is $C$-linearly independent,
\item[(MI)] $\sv(\bar t)$ is (weakly) monotonically increasing, and
\item[(ONE)] each column of its associated matrix contains exactly one non-zero element.
\end{itemize}
\end{defn}

\begin{figure}[ht]
 \vspace{-0.5cm}
\[\left(
\begin{tabular}{cccccccccc}
$\bullet$ & $\cdots$ & $\bullet$ & & & & & & & \\
& & & $\bullet$ & $\cdots$ & $\bullet$ & & & & \\
& & & & & & $\ddots$ & & & \\
& & & & & & & $\bullet$ & $\cdots$  & $\bullet$ \\
& & & & & & & & & \\
\end{tabular}
\right)
\]
 \vspace{-0.5cm}
\caption{The associated matrix of a well-generated tower is in the form of a \lq\lq staircase\rq\rq\, where the $\bullet$'s are $C$-linearly independent and other entries are zero.}
\Description[short]{long}
\label{FIG:WGPT}
\end{figure}

We will show that a logarithmic tower $K_0(\bar t)$ can be embedded into a well-generated one. To this end, we impose the usual lexicographical order on two significant vectors~\cite[Chapter 2, Definition 3]{CLO}\footnotemark[1].

\begin{theorem} \label{TH:wg}
Let $K_0(\bar t)$ be a logarithmic tower.
Then there exists a well-generated logarithmic tower $K_0(\bar u)$,  where $\bar u = (u_1,\ldots,u_w)$ and $n \leq w \leq n(n+1)/2,$  and  a differential  homomorphism $\phi$ from $K_0(\bar t)$ into  $K_0(\bar u )$ with $\phi|_{K_0} = {\rm id}_{K_0}$.
\end{theorem}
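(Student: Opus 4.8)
The plan is to prove Theorem~\ref{TH:wg} by induction on $n$, where at each stage I "repair" one violation of the well-generated conditions at a time, carefully tracking that the repair process terminates. First I would record the effect of the two basic logarithmic manipulations on the associated matrix: given logarithmic generators, the logarithmic derivative identity lets me replace a generator $t_j$ whose derivative is a $C$-linear combination $\sum_k c_k\, p_k'/p_k$ of logarithmic derivatives by the collection of new generators $\log p_k$ (product/quotient rule), and conversely lets me combine generators. The matryoshka decomposition tells me which $P_i$ each piece $p_k'/p_k$ lands in, so splitting a generator this way refines the matrix by breaking one column into several columns, each supported in a single row once the logarithmic derivative $p_k'/p_k$ with $p_k \in K_{i}[t_{i+1}]$ is itself further decomposed via Lemma~\ref{LM:logder}(ii). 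The upshot I want to establish as a lemma: any logarithmic tower can be re-generated (via a differential $K_0$-isomorphism onto a possibly larger tower) so that \emph{each column of its associated matrix has exactly one nonzero entry} — this handles condition (ONE), and it is here that the bound $w \le n(n+1)/2$ comes from, since splitting the $j$-th column can produce at most $j$ new generators (one per row $P_0,\ldots,P_{j-1}$), giving $\sum_{j=1}^n j = n(n+1)/2$.

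Next I would arrange condition (MI). Once (ONE) holds, each new generator $u_j$ has a well-defined significant index $k_j = \si^{\bar u}(u_j')$, namely the unique row where its column is nonzero. I would show one can reorder the generators so that the significant vector is monotonically increasing: the key point is that if $\si(u_j') = k$ then $u_j' \in K_k$ in a way that only involves generators with significant index $\le k$, so a stable sort of the generators by significant index yields a valid differential tower (each $u_j'$ still lies in the subfield generated by its predecessors) with the same associated matrix up to permutation of columns, hence (ONE) is preserved and (MI) now holds. I should check that this reordering is a genuine tower — i.e. that after sorting, $u_j'$ lies in $K_0(u_1,\ldots,u_{j-1})$ — which follows because the generators feeding into $u_j'$ all have significant index $\le k_j$ and thus appear before $u_j$ after the stable sort; and that transcendence/constant conditions are inherited, which is Proposition~\ref{PROP:primitivemonomial} together with the fact that we have not changed the field.

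Finally I would handle (CLI): the significant component sequence $\scc(\bar u) = (\pi_{k_1}(u_1'),\ldots,\pi_{k_w}(u_w'))$ must be $C$-linearly independent. If it is not, some nontrivial $C$-combination of the $\pi_{k_j}(u_j')$ vanishes; grouping by significant index and using that distinct rows are independent, this forces a linear dependence among the $\pi_{k}(u_j')$ for the $j$'s sharing a common significant index $k$. I would then use such a dependence to eliminate one generator: replace the offending $u_j$ by a new logarithmic generator whose derivative is the corresponding combination, which either has strictly smaller significant index (so we recurse, and the multiset of significant indices strictly decreases in the lexicographic order on $\N^w$, giving termination) or is actually integrable, in which case by Lemma~\ref{LM:reduce0} and Proposition~\ref{PROP:primitivemonomial} that generator can be absorbed into $K_0$ up to an element of $\spa_C\{\ldots\}$, reducing $w$. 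Throughout, the embedding $\phi$ is built by composing the differential $K_0$-homomorphisms from each repair step (each of which sends old generators to explicit rational expressions in the new ones via the product/quotient rules), so $\phi|_{K_0} = \mathrm{id}_{K_0}$ is automatic and injectivity follows since each step is an embedding of fields. The main obstacle I expect is the bookkeeping in the (ONE)-repair lemma: one must split columns and re-sort simultaneously without creating new multi-entry columns or breaking the tower structure, and prove the $n(n+1)/2$ bound is not exceeded — a naive iteration could split, re-sort, and re-split indefinitely, so the termination argument (a well-chosen decreasing measure, likely the lexicographically-ordered vector of column "spreads") is the delicate part.
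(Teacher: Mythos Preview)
Your approach reverses the paper's order: you attempt (ONE) $\to$ (MI) $\to$ (CLI), whereas the paper does (CLI)${+}$(MI) first (within the same field, via linear combinations and adjacent swaps, with the significant vector as Noetherian measure) and only then performs a \emph{single} embedding to obtain (ONE). That ordering is not merely cosmetic; it is what makes the argument go through cleanly and what yields the bound $w\le n(n+1)/2$ directly from the upper-triangular shape of the associated matrix.

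The gap in your plan is the claim that a stable sort by significant index leaves the associated matrix unchanged ``up to permutation of columns'' and hence preserves (ONE). The projections $\pi_i$ depend on the \emph{order} of the generators, not just on the set, so reordering changes the rows as well. Concretely, take $u_1=\log x$, $u_2=\log\log x$, $u_3=\log(x+1)$, and (after your splitting step) $u_5=\log(u_3+1/u_2)$. Then $u_5'=\dfrac{(u_3+1/u_2)'}{u_3+1/u_2}$ is $u_3$-proper, so its only nonzero projection is $\pi_3^{\bar u}(u_5')$ and (ONE) holds. Since $\si(u_2')=1>\si(u_3')=0$, your sort swaps $u_2$ and $u_3$, giving $\bar v=(u_1,u_3,u_2,\ldots)$. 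In $\bar v$ one computes
\[
u_5'=\frac{1}{(x+1)\,v_2}\;+\;\text{(a nonzero $v_3$-proper element)},
\]
so $\pi_2^{\bar v}(u_5')\ne 0$ and $\pi_3^{\bar v}(u_5')\ne 0$: (ONE) is destroyed. You partly anticipate trouble in your final paragraph, but the body of the plan asserts (ONE) is preserved, and once it is not, you are forced into a split--sort--split iteration for which you give neither a terminating measure nor any control on $w$ (re-splitting after a sort can push $w$ past $n(n+1)/2$). The paper sidesteps all of this: after (CLI) and (MI) are in place, it extracts a $C$-basis $b_1,\ldots,b_w$ of the matrix entries and defines $u_k'=\phi(b_k)$ in one shot; (ONE), (MI), and (CLI) for $K_0(\bar u)$ then follow from the way the $b_k$ were ordered and from (CLI)${+}$(MI) on the original tower.
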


\begin{proof}

This proof will be separated into two parts. The first part will show that each primitive (specifically, logarithmic) tower is isomorphic to one where properties (CLI) and (MI) are satisfied. This will enable us to embed the resulting logarithmic tower into a well-generated one, which makes up the second part of the proof.


If $K_0(\bar t)$ does not satisfy (CLI) and (MI), then we can show there exists $v_1,\ldots,v_n\in K_n$ such that $K_0(\bar v)$ is primitive, $K_0(\bar v)=K_0(\bar t)$, and  $\sv(\bar{v})$ is lower than $\sv(\bar{t})$. Since the order of the significant vectors is Noetherian, we can eventually reach a primitive tower that satsifies both (CLI) and (MI).

We start by supposing that $\scc(\bar t)$ is $C$-linearly dependent. Since $\si^{\bar t}(t_1^\prime)=0$,
there exists an $i \in \{2, \ldots, n\}$ and constants $c_1,\ldots,c_{i-1}$ such that
$\scc_i=\sum_{j=1}^{i-1}c_j \cdot \scc_j,$
where~$\scc_j$ is the $j$-th element in $\scc(\bar t)$.
We remove the last non-zero projection of $t_i'$ by setting $v_k:=t_k$ for all $k\in[n]\setminus\{i\}$ and $v_i:=t_i-\sum_{j=1}^{i-1}c_jt_j.$
Thus, $K_0(\bar v)=K_0(\bar t)$. Also,
$\si^{\bar v}(v_k^\prime)=\si^{\bar t}(t_k^\prime)$ for all $k$ in $[n]\setminus\{i\}$ and $\si^{\bar v}(v_i^\prime) < \si^{\bar t}(t_i^\prime).$ We conclude that $K_0(\bar v)$ is a primitive tower with a lower significant vector than $K_0(\bar t)$.

Next, we assume that $\sv(\bar t)$ is not monotonically increasing.
Then there exist an $i \in [n]$ such that
$\si^{\bar t}(t_1')\leq \cdots \leq \si^{\bar t}(t_i')$ and $\si^{\bar t}(t_{i+1}')<\si^{\bar t}(t_i').$ We switch the $i$-th and $(i+1)$-st generators by setting $v_k:=t_k$ for all $k\in[n]\setminus\{i,i+1\}$ and $$v_i:=t_{i+1}; \ v_{i+1}:= t_{i}.$$ Thus, $K_0(\bar v)=K_0(\bar t)$.
Also, $\si^{\bar v}(v_j^\prime) = \si^{\bar t}(t_j^\prime)$ for $j\in[i-1]$ and $\si^{\bar v}(v_i^\prime) < \si^{\bar t}(t_i^\prime)$. Thus, $K_0(\bar v)$ is a primitive tower with a lower significant vector than $K_0(\bar t)$.


If the original primitive tower from the argument is  logarithmic, then the new generators from the above process are  also logarithmic generators. This implies the new tower must be  logarithmic satisfying (CLI) and (MI), and this is what we assume about $K_0(\bar{t})$ from this point forward.


For the second part of the proof, we show that $K_0(\bar{t})$ can be embedded into a well-generated tower. We find the $C$-basis of the associated matrix $\left(\pi_i(t_j')\right)$ by letting $b_1=\pi_0(t_1')$ and identifying all $C$-linearly independent elements $b_2,\ldots, b_w$, ordered by searching the matrix from left to right and top to bottom. Since $K_0(\bar t)$ is primitive, $n\leq w \leq n(n+1)/2$. Since $K_0(\bar{t})$ satisfies (CLI) and (MI), there exist $\ell_1,\ldots,\ell_n \in [w]$ such that $\ell_1=1, \ell_n=w$,
\begin{equation}\label{EQ:sc}
\ell_1<\ell_2<\cdots<\ell_n \ \mbox{and} \  \left(b_{\ell_1},\ldots,b_{\ell_n}\right) =\scc(\bar t).
\end{equation}
By the definition of the associated matrix and the ordering of \\ $\{b_1,\ldots,b_w\}$, for all $j \in [n]$ there exist $c_{j,k}\in C$ such that
\begin{equation} \label{EQ:der}
t_j'=b_{\ell_j}+\sum\limits_{k=1}^{\ell_j-1}c_{j,k}\cdot b_k.
\end{equation}
Let $u_1, \ldots, u_w$ be algebraically independent indeterminates over $K_0$, and $\bar u := (u_1, \ldots, u_w)$. Let $v_j := u_{\ell_j} + \sum_{k=1}^{\ell_j-1} c_{j,k} \cdot u_k$ for all $j\in[n]$.
Then $v_1, \ldots, v_n$ are algebraically independent over $K_0$, because $u_{\ell_j}$ does not appear in the expressions defining $v_1,$ \ldots, $v_{j-1}$. It follows that $\phi:K_0(\bar t) \rightarrow K_0(\bar u)$ defined by $f(t_1, \ldots, t_n)\mapsto f(v_1, \ldots, v_n)$ is a monomorphism and $\phi|_{K_0} = {\rm id}_{K_0}$. For every $k \in [w]$, we define
\begin{equation} \label{EQ:der1}
u_k^\prime = \phi(b_k).
\end{equation}
 Since $u_1, \ldots, u_w$ are algebraically independent over $K_0$, the tower $K_0(\bar u)$ is a differential field
 by Corollary $1^\prime$ in~\cite[page 124]{ZariskiSamuel}\footnotemark[1]. By~\eqref{EQ:der}, $\phi(t_j^\prime)=v_j^\prime$ for all $j \in [n]$.  Thus, $\phi$ is a differential monomorphism.

Lastly, we show that $K_0(\bar{u})$ is a well-generated tower over~$K_0$. Set $\ell_0=0$. For each $k \in [w]$, there exists a $j \in [n]$ such that $\ell_{j-1} < k \le \ell_j$.
Then $s:=\si^{\bar t}(b_k) \le \si^{\bar t}(t_j') <j$ and $b_k$ is $t_s$-proper. Since $\phi$ is a monomorphism, it preserves degrees. By~\eqref{EQ:der1},  $u_k^\prime$ is $u_{\ell_s}$-proper, where $\ell_s \le \ell_{j-1} < k$ since $s<j$.
Hence, $u_k^\prime \in K_0(u_1, \ldots, u_{k-1})$.
Since $\phi$ is differential and $b_k$ is a logarithmic derivative,  $u_k^\prime$ is also a logarithmic derivative by~\eqref{EQ:der1}. In particular, $u_k^\prime$ is $u_{\ell_s}$-simple by Lemma~\ref{LM:logder} (i). Moreover, $b_1, \ldots, b_w$ are $C$-linearly independent, and so are $\phi(b_1), \ldots, \phi(b_w)$ because $\phi$ is a monomorphism. It follows from~\eqref{EQ:der1} that $u_1^\prime, \ldots, u_w^\prime$ are $C$-linearly independent, which implies that $K_0(\bar u)$ is a logarithmic tower by Corollary~\ref{COR:Spri}. In addition, $\pi_i(u_k^\prime)=0$ for all $k \in [w]$ and $i \in [w] \setminus \{\ell_s\}$, because $u_k'$ is $u_{\ell_s}$-proper. Consequently, $K_0(\bar u)$ is well-generated.
\end{proof}

\newpage

The proof of this theorem shows that a logarithmic tower~$\cF$ can be algorithmically embedded in a well-generated tower~$\cE$ by a differential homomorphism $\phi$.
Let $f$ be an element of $\cF$ with a remainder $r$.
Our additive decomposition can be applied to $\phi(f)$ in $\cE$ to get a remainder whose order is not higher than that of $\phi(r)$, and this is what we mean by ``finer''.

The next example illustrates the results of the embedding algorithm and $\textsc{AddDecompInField}$ in both towers.
\begin{ex}
Consider the logarithmic tower
$$\cF = C(x)\Big(\underbrace{\log(x)}_{t_1},\underbrace{\log(xt_1)}_{t_2},\underbrace{\log\big((x+1)(t_1+1)\log(x t_1)\big)}_{t_3}\Big).$$
By Theorem \ref{TH:wg}, there exists a well-generated tower
$$\cE = C(x)\Big(\underbrace{\log(x)}_{u_1},\underbrace{\log(x{+}1)}_{u_2},\underbrace{\log(u_1)}_{u_3},\underbrace{\log(u_1{+}1)}_{u_4},\underbrace{\log( u_1{+}u_3)}_{u_5}\Big)$$
and a differential homomorphism $\phi$ from  $\cF$ to $\cE$ given by
$\phi(t_1) = u_1$, $\phi(t_2)=u_1+u_3$ and $\phi(t_3)=u_2+u_4 + u_5$.
The associated matrices of $\cF$ and $\cE$ are, respectively,
$$\begin{pmatrix}
\frac{1}{x} & \frac{1}{x} & \frac{1}{x+1}\\[6pt]
0 & \frac{t_1'}{t_1} & \frac{t_1'}{t_1+1}\\[6pt]
0 & 0 & \frac{1+t_1}{x t_1t_2}
\end{pmatrix} \,\mbox{and} \,\begin{pmatrix}
\frac{1}{x} & \frac{1}{x+1}& 0 & 0 & 0 \\
0 & 0 & \frac{u_1'}{u_1} & \frac{u_1'}{u_1+1} &0  \\
0 & 0 & 0 & 0 & 0 \\
0 & 0 & 0 & 0 & \frac{(u_1+u_3)'}{u_1+u_3}\\
0 & 0 & 0 & 0 & 0
\end{pmatrix}.$$
\label{EX:logwellgen}

\noindent Let
$$f_1 = \frac{(t_1+1)^2+t_1 t_2}{x t_1 (t_1+1) t_2} \text{ and } f_2 = \frac{t_3}{x}$$
be two elements of $\cF$. Then $\phi(f_1)$ and $\phi(f_2)$ are
$$\frac{(u_1+1)^2+u_1(u_1+u_3)}{x u_1(u_1+1)(u_1+u_3)}  \quad \mbox{and} \quad \frac{u_2+u_4+u_5}{x},$$
respectively. Using \textsc{AddDecompInField}, we compute the respective remainders of $f_1$ and $f_2$ to obtain
$$r_1 = f_1 \quad \text{and}\quad r_2 = \frac{t_1}{-(x+1)}+\frac{1}{x(t_1+1)}+\frac{-(t_1+1)}{x t_2}.$$
In the same vein, we get the remainders of $\phi(f_1)$ and $\phi(f_2)$,
$$\tilde{r}_1=0 \quad \text{and}\quad \tilde{r}_2 = \frac{u_1}{-(x+1)}+\frac{-(u_1+1)}{x(u_1+u_3)},$$
respectively.
Note that $\phi(r_1) \neq 0$ but $\tilde{r}_1=0$, which implies that $\tilde r_1 \prec \phi(r_1)$. While $\tilde{r}_2$ and $\phi(r_2)$ have the same order, we observe that $\tilde r_2$ has fewer nonzero projections than $\phi(r_2)$. 
\end{ex}

\section{Conclusions} \label{SECT:conc}
In this article, we have introduced the matryoshka decomposition to develop an additive decomposition in an S-primitive tower. The decomposition algorithm is based on Hermite reduction and integration by parts. It provides an alternative method for determining in-field (resp.\ elementary) integrability in (resp.\ over) an S-primitive tower without solving any differential equations. Moreover, we embed a logarithmic tower into a well-generated one. The embedding enables us to compute finer remainders.

We observe that the notion of remainders is defined according to a partial order among multivariate rational functions. It would be possible to refine this notion so that remainders possess  certain uniqueness. Moreover, we plan to investigate whether our additive decomposition is applicable to compute telescopers for elements in an S-primitive tower, as carried out in~\cite{CDL2018}. We also hope to develop an additive decomposition in exponential extensions.

\begin{acks}
We are grateful to Shaoshi Chen, Christoph Koutschan and Clemens Raab for their valuable comments and suggestions. H.\ Du and E.\ Wong were supported by the Austrian Science Fund (FWF): F5011-N15. J.\ Guo and Z.\ Li were supported by two NFSC Grants 11688101 and 11771433.
\end{acks}


\newpage
\appendix

\section{Appendix}

For the convenience of the reviewers, this section lists definitions, a lemma, some theorems and a corollary that we use from other books and papers but did not explicitly state in this paper. It will not appear in a formal publication.


\begin{defn}(Definition 5.1.1 in \cite{BronsteinBook})
Suppose $k$ is a differential field and $K$ is a differential extension of $k$. We say that
\begin{itemize}
\item[(i)] $t\in K$ is a {\em primitive over} $k$ if $Dt \in k$,
\item[(ii)] $t \in K^*$ is a {\em hyperexponential over} k if $Dt/t \in k$, and
\item[(iii)] $t\in K$ is {\em Liouvillian over} $k$ if $t$ is either algebraic, a primitive, or a hyperexponential over $k$.
\end{itemize}
$K$ is a {\em Liouvillian
extension} of $k$ if there are $t_1,\ldots,t_{n}$ in $K$ such that $K=k(t_1,\ldots,t_n)$ and $t_i$ is Liouvillian over $k(t_1,\ldots,t_{i-1})$ for $i\in\{ 1 , \ldots , n\}$.
\end{defn}

\begin{defn}(Definition 5.1.2 in \cite{BronsteinBook})
Suppose $k$ is a differential field and $K$ is a differential extension of $k$. We say that $t\in K$ is a {\em Liouvillian monomial over} $k$ if $t$ is transcendental and Liouvillian over $k$ and $C_{k(t)}=C_k$.
\end{defn}

\begin{defn}(Definition 5.1.3 in \cite{BronsteinBook})
$t\in K$ is a {\em logarithm over} $k$ if $Dt=Db/b$ for some $b\in k^*$. $t\in K^*$ is an {\em exponential over} $k$ if $Dt/t=Db$ for some $b\in k$. $t\in K$ is {\em elementary over} $k$ if $t$ is either algebraic, or a logarithm or an exponential over $k$. $t\in K$ is an {\em elementary monomial over} $k$ if $t$ is transcendental and elementary over $k$, and Const$(k(t))=$ Const$(k)$.
\end{defn}

\begin{defn}(Definition 5.1.4 in \cite{BronsteinBook})
$K$ is an {\em elementary extension} of $k$ if there are $t_1,\ldots, t_n$ in $K$ such that $K=k(t_1,\ldots,t_n)$ and $t_i$ is elementary over $k(t_1,\ldots,t_{i-1})$ for $i$ in $\{1,\ldots, n\}$. We say that $f\in k$ has an {\em elementary integral over} $k$ if there exists an elementary extension $E$ of $k$ and $g\in E$ such that $Dg=f$. An {\em elementary function} is any element of any elementary extension of $(\mathbb{C}(x),d/dx)$.
\end{defn}

\begin{theorem}(Theorem 5.1.1 in \cite{BronsteinBook})
If $t$ is a primitive over a differential field $k$ and $Dt$ is not the derivative of an element of $k$, then $t$ is a monomial over $k$, $C_{k(t)}=C_k$, and $S=k$. Conversely, if $t$ is transcendental and primitive over $k$ and $C_{k(t))}=C_k$, then $Dt$ is not the derivative of an element of $k$.
\end{theorem}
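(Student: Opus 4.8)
The plan is to treat the two implications separately, with the forward (harder) direction resting on one recurring computation: extend the derivation $D$ to $k(t)$ and, for any $p=\sum_{i=0}^n b_i t^i\in k[t]$ with $b_n\neq 0$, record that, because $Dt\in k$, one has $Dp=(Db_n)t^n+\bigl(Db_{n-1}+n b_n\,Dt\bigr)t^{n-1}+\cdots$, so $\deg_t Dp\le \deg_t p$ with leading coefficient $Db_n$. The whole point of the hypothesis ``$Dt\notin Dk$'' (where $Dk:=\{Da\mid a\in k\}$) is that whenever such a computation isolates a coefficient of the form $Db_{n-1}+n b_n\,Dt=0$ with $b_n$ a nonzero constant, it rearranges to $Dt=D(-b_{n-1}/(nb_n))\in Dk$, a contradiction. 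I expect this ``extraction of a derivative lying in $k$'' to be the crux that drives every case.

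For the forward direction, I would first show $t$ is transcendental over $k$. Assuming the contrary, take the monic minimal polynomial $p(t)=t^n+a_{n-1}t^{n-1}+\cdots+a_0$ of $t$ over $k$ and differentiate the identity $p(t)=0$. The result $Dp(t)$ is a polynomial of degree $<n$ vanishing at $t$, so by minimality it is identically zero; reading its $t^{n-1}$-coefficient gives $n\,Dt+Da_{n-1}=0$, whence $Dt\in Dk$, contradicting the hypothesis. Next I would establish $C_{k(t)}=C_k$ in two substeps. First, a constant $c=a/b\in k(t)$ (with $a,b\in k[t]$ coprime, $b$ monic) forces $b\mid Db$; since $\deg_t Db<\deg_t b$ when $\deg_t b\ge 1$, this gives $Db=0$, but $Db=0$ with $\deg_t b\ge 1$ would again yield $Dt\in Dk$ by the sub-leading-coefficient computation, so $b\in k$ and every constant is a polynomial in $t$. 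Second, if $c=\sum_{i=0}^n c_i t^i\in k[t]$ is constant with $n\ge 1$ and $c_n\neq 0$, then $Dc=0$ gives $Dc_n=0$ (so $c_n\in C_k$) and then $n c_n\,Dt+Dc_{n-1}=0$, once more producing $Dt\in Dk$. Hence $n=0$, $c\in k$, and so $C_{k(t)}=C_k$; together with transcendence this says $t$ is a monomial.

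The remaining conclusion $S=k$ asks that the only special polynomials (those $p\in k[t]$ with $p\mid Dp$) be the elements of $k$. I would argue that for $\deg_t p\ge 1$ no such $p$ exists: reducing to the monic case $\tilde p$ (the scalar $b_n\in k$ is coprime to $\tilde p$ in $t$), $\tilde p\mid D\tilde p$ with $\deg_t D\tilde p<\deg_t\tilde p$ forces $D\tilde p=0$, and a constant polynomial of positive degree was already excluded above (it would give $Dt\in Dk$). Thus every special polynomial has degree zero, i.e.\ $S=k$. Note all three parts are the same ``extract a derivative from $k$'' mechanism applied to the leading or sub-leading coefficient, so once that computation is in place the forward direction is routine.

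The converse is immediate. Assuming $t$ transcendental and primitive with $C_{k(t)}=C_k$, suppose $Dt=Du$ for some $u\in k$. Then $D(t-u)=0$, so $t-u\in C_{k(t)}=C_k\subseteq k$, forcing $t\in k$ and contradicting transcendence; hence $Dt\notin Dk$. The main obstacle is therefore not the converse but organizing the forward direction so that the single coefficient-extraction step is proved once and reused for transcendence, for the constant field, and for the special polynomials.
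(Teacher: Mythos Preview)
Your proof is correct. Note, however, that the paper does not itself prove this statement: it appears only in the appendix as a verbatim quotation of Theorem~5.1.1 from Bronstein's \emph{Symbolic Integration~I}, listed there ``for the convenience of the reviewers,'' so there is no in-paper proof to compare against. Your argument is the standard one (and essentially Bronstein's): the single computation that for monic $p=\sum_i b_i t^i\in k[t]$ of degree $n$ one has $Dp=(Db_{n-1}+n\,Dt)\,t^{n-1}+\cdots$, so that $Dp=0$ forces $Dt=D(-b_{n-1}/n)\in Dk$, is exactly what rules out algebraicity, nontrivial constants, and special polynomials of positive degree. The converse via $D(t-u)=0\Rightarrow t-u\in C_{k(t)}=C_k\subseteq k$ is likewise the expected one-line argument.
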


\begin{theorem}(Theorem 5.3.1 in \cite{BronsteinBook})
Let $f\in k(t)$. Using only the extended Euclidean algorithm in $k[t]$, one can find $g,h,r\in k(t)$ such that $h$ is simple, $r$ is reduced, and $f=Dg+h+r$. Furthermore, the denominators of $g,h$ and $r$ divide the denominator of $f$, and either $g=0$ or $\mu(g)<\mu(f)$.
\end{theorem}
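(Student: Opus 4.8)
The plan is to reproduce the classical Ostrogradsky--Hermite reduction, adapted to the differential setting in which $t$ is a monomial over $k$ (so $D$ extends the derivation of $k$ and $Dt \in k[t]$), which is the standing hypothesis under which this statement is invoked. First I would put $f$ in the form $f = p + a/d$ with $p \in k[t]$ the polynomial part and $a/d$ proper in lowest terms ($\gcd(a,d)=1$), using the unique decomposition~\eqref{EQ:decomposition}. The polynomial part $p$ is ultimately absorbed into the reduced remainder, since its denominator $1$ is special.

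The organizing device is the splitting of the denominator into normal and special parts. Using the extended Euclidean algorithm in $k[t]$, I would compute a splitting factorization $d = d_n\, d_s$, where $d_s$ is special ($d_s \mid D d_s$), $d_n$ is normal ($\gcd(d_n, Dd_n)=1$), and $\gcd(d_n, d_s)=1$; this rests on the fact that each squarefree factor decomposes uniquely into coprime normal and special parts and that $D$ maps a special factor into a multiple of itself. Because $d_n$ and $d_s$ are coprime, a partial-fraction split (again via the extended Euclidean algorithm) gives $a/d = a_n/d_n + a_s/d_s$ with both fractions proper. I would then set aside $r := p + a_s/d_s$, whose denominator $d_s$ is special, so $r$ is reduced by definition, and whose denominator divides $d$.

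The heart of the proof is the Hermite reduction applied to $a_n/d_n$. Taking the squarefree factorization $d_n = \prod_i v_i^{\,i}$ (the $v_i$ squarefree, normal, pairwise coprime), I would treat any factor $v = v_i$ occurring to a power $e \ge 2$. Normality gives $\gcd(v, Dv)=1$, so the extended Euclidean algorithm yields a B\'ezout relation expressing the current numerator as a $k[t]$-combination of $v$ and $Dv$; integration by parts on the $Dv$-part then rewrites the fraction as $D\!\left(\tfrac{-c}{(e-1)\,v^{e-1}}\right)$ plus a fraction whose $v$-multiplicity is $e-1$. Here normality---not mere squarefreeness---is exactly what makes $Dv$ invertible modulo $v$, and this is why the special part had to be removed first. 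Iterating over all factors and multiplicities accumulates a derivative $Dg$ and leaves a simple remainder $h$ whose denominator is the radical of $d_n$; since every intermediate denominator is a product of powers $v_i^{\,j}$ with $j \le i$, all of them (hence those of $g$ and $h$) divide $d_n$ and therefore $d$, giving $f = Dg + h + r$ with $h$ simple and $r$ reduced.

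Finally I would verify the measure claim. The term $g$ produced by the reduction involves only normal factors $v_i$, each to a power strictly below its multiplicity in $d_n$, so the quantity $\mu$ measuring the top multiplicity of the (normal part of the) denominator satisfies $\mu(g) < \mu(f)$ whenever the reduction actually fires; and $g = 0$ precisely when no factor occurs to a power $\ge 2$, i.e.\ when $d_n$ is already squarefree. The main obstacle I anticipate is the bookkeeping in the differential Hermite step: unlike the classical case $Dt = 1$, the derivative $Dv$ blends the coefficient derivation on $k$ with the $\partial/\partial t$ term weighted by $Dt$, so I must confirm that the coprimality $\gcd(v, Dv)=1$ and the relevant degree bounds are preserved through each reduction and that the multiplicity strictly decreases, guaranteeing termination---all of which follow from the stability of normality under the construction.
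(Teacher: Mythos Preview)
The paper does not prove this theorem; it is quoted from Bronstein's book and merely restated in the reviewer-only appendix, so there is no in-paper proof to compare against. Your outline---canonical form $p+a/d$, splitting factorization $d=d_n d_s$ into normal and special parts, partial fractions to peel off the reduced piece $r=p+a_s/d_s$, then iterated Hermite reduction on $a_n/d_n$ using $\gcd(v,Dv)=1$ for normal $v$---is exactly the standard argument from Bronstein, and the divisibility and $\mu$-claims follow as you indicate.
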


\begin{lemma}(Lemma 2.1 in \cite{CDL2018})
Let $g\in K[t]+K(t)'$. Then $g=0$ if it is $t$-simple.
\end{lemma}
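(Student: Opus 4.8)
The plan is to show that a $t$-simple element of $K[t]+K(t)'$ can have no poles in $t$, whence it is a $t$-proper polynomial and therefore zero. First I would write $g=q+f'$ with $q\in K[t]$ and $f\in K(t)$, and split $f=P+h$ into its polynomial part $P\in K[t]$ and its $t$-proper part $h$. Since $t'\in K[t]$, the derivation sends $K[t]$ into $K[t]$, so $P'\in K[t]$; absorbing it gives $g=(q+P')+h'$ with $q+P'\in K[t]$ and $h$ being $t$-proper. Thus $g$ and $h'$ differ by a polynomial, so they have exactly the same poles in $t$, because adding an element of $K[t]$ alters no denominator.

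If $h=0$, then $g\in K[t]$ is at once a polynomial and $t$-proper, and the only $t$-proper polynomial is $0$, so $g=0$. It therefore suffices to contradict $h\neq 0$. For this I would take the monic irreducible factorization of the denominator $D$ of $h$ together with the partial-fraction decomposition $h=\sum_i N_i/D_i^{e_i}$, where the $D_i$ are distinct monic irreducibles, $e_i\geq 1$, $\gcd(N_i,D_i)=1$, and $\deg_t N_i<\deg_t D_i^{e_i}$. Differentiating termwise,
\begin{equation*}
h'=\sum_i\frac{N_i'D_i-e_iN_iD_i'}{D_i^{e_i+1}},
\end{equation*}
so the pole of $h'$ at $D_i$ has order $e_i+1\geq 2$ as soon as $D_i$ does not divide the numerator $N_i'D_i-e_iN_iD_i'$. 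Reducing modulo $D_i$ and using $\gcd(N_i,D_i)=1$ together with $e_i\neq 0$ in characteristic zero, this reduces to the single claim $D_i\nmid D_i'$.

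The main obstacle is exactly this no-cancellation claim, which I would settle by a degree count in the primitive setting of this paper. Because $D_i$ is monic of positive $t$-degree $d$, the contribution $\partial_t D_i\cdot t'$ has $t$-degree $d-1$ (as $t'\in K$), while applying $'$ to the coefficients of $D_i$ yields a polynomial of $t$-degree at most $d-1$ (the leading coefficient $1$ has zero derivative); hence $\deg_t D_i'<\deg_t D_i$. Moreover $D_i'\neq 0$, since $D_i'=0$ would place the positive-degree polynomial $D_i$ in $C_{K(t)}=C_K\subseteq K$, which is absurd. From $\deg_t D_i'<\deg_t D_i$ and $D_i'\neq 0$ it follows that $D_i\nmid D_i'$, so no cancellation occurs and $D_i^{e_i+1}$ with $e_i+1\geq 2$ genuinely divides the denominator of $h'$, hence of $g$. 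By Proposition~\ref{PROP:primitivemonomial} this denominator is then not $t$-normal, contradicting the $t$-simplicity of $g$; therefore $h=0$ and $g=0$. In the general monomial case $t'\in K[t]$ one must additionally dispose of \emph{special} irreducible factors satisfying $D_i\mid D_i'$, which never arise here because every positive-degree irreducible is $t$-normal by Proposition~\ref{PROP:primitivemonomial}; such factors would in any event leave the denominator of $h'$ non-$t$-normal, so the conclusion is unchanged.
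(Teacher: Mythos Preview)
Your argument is correct. The present paper does not actually prove this lemma---it is quoted from \cite{CDL2018} and merely restated in the appendix---so there is no in-paper proof to compare against; your route (write $g=(q+P')+h'$ with $h$ $t$-proper, then show that in the primitive case every pole of $h'$ has order at least $2$ because $\deg_t D_i'<\deg_t D_i$ and $D_i'\neq 0$, contradicting $t$-normality of the denominator of $g$) is the standard one and is essentially how the result is established in the cited reference.
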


\begin{theorem}(Theorem 3.1.1 (v) in \cite{BronsteinBook}, Logarithmic Derivative Identity)
Let $(R,D)$ be a differential ring. If $R$ is an integral domain, then $$\frac{D(u_1^{e_1}\cdots u_n^{e_n})}{u_1^{e_1}\cdots u_n^{e_n}}=e_1\frac{Du_1}{u_1}+\cdots+e_n\frac{Du_n}{u_n}$$ for any $u_1,\ldots,u_n\in R^*$ and any integers $e_1,\ldots,e_n$.
\end{theorem}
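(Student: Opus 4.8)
The plan is to reduce the entire identity to a single structural fact: the map sending a nonzero element to its logarithmic derivative is a homomorphism from a multiplicative group to an additive group. First I would pass to the fraction field $F=\mathrm{Frac}(R)$. Since $R$ is an integral domain, the derivation $D$ extends uniquely to a derivation on $F$ via the quotient rule $D(a/b)=(b\,Da-a\,Db)/b^2$, and every nonzero element of $R$ becomes a unit in $F$. This is the only place the integral-domain hypothesis is genuinely used, and it is what makes the expressions $u_i^{e_i}$ and $Du_i/u_i$ meaningful for negative integer exponents $e_i$.

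Next I would define the logarithmic derivative $L\colon F^*\to F$ by $L(f):=Df/f$ and show it is a group homomorphism from $(F^*,\cdot)$ to $(F,+)$. The product (Leibniz) rule gives
\[
L(fg)=\frac{D(fg)}{fg}=\frac{(Df)g+f(Dg)}{fg}=\frac{Df}{f}+\frac{Dg}{g}=L(f)+L(g),
\]
so $L$ is additive on products. Applying this to $f\cdot f^{-1}=1$ and using $D(1)=0$ yields $L(f)+L(f^{-1})=L(1)=0$, hence $L(f^{-1})=-L(f)$.

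From the homomorphism property I would then obtain $L(f^e)=e\,L(f)$ for every $e\in\Z$: the case $e\ge 0$ follows by induction on $e$ (base case $L(f^0)=L(1)=0$, inductive step from the additivity above), and the case $e<0$ follows by combining the nonnegative case with $L(f^{-1})=-L(f)$. Finally, applying additivity of $L$ across the whole product and then $L(u_i^{e_i})=e_i\,L(u_i)$ term by term gives
\[
L\!\left(u_1^{e_1}\cdots u_n^{e_n}\right)=\sum_{i=1}^n L\!\left(u_i^{e_i}\right)=\sum_{i=1}^n e_i\,\frac{Du_i}{u_i},
\]
which is exactly the claimed identity.

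I do not expect a genuine obstacle here: once the derivation is extended to $F$, the identity is a formal consequence of the Leibniz rule together with an induction on exponents. The only point requiring care is the bookkeeping around negative exponents---specifically, verifying that $D$ extends to the fraction field and that $L$ is well-defined and homomorphic on all of $F^*$ rather than merely on the nonzero elements of $R$. If one prefers to avoid the fraction field entirely, an alternative is to prove that $e_1 L(u_1)+\cdots+e_n L(u_n)$ equals the stated logarithmic derivative directly by clearing denominators and inducting on $\sum_i |e_i|$, but passing to $F$ keeps the argument cleanest.
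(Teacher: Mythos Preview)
The paper does not supply its own proof of this statement; the theorem appears only in the appendix as a verbatim citation from Bronstein's \emph{Symbolic Integration I}, listed there for the reviewers' convenience. Your argument is correct and is the standard one: pass to the fraction field so that negative exponents make sense, verify via the Leibniz rule that $f\mapsto Df/f$ is a homomorphism from $(F^*,\cdot)$ to $(F,+)$, deduce $L(f^e)=e\,L(f)$ for all $e\in\Z$, and apply this termwise to the product.
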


\begin{defn} (Definition 5.1.4 in \cite{BronsteinBook})
$K$ is an elementary extension of $k$ if there are $t_1,\ldots,t_n$ in $K$ such that $K=k(t_1,\ldots,t_n)$ and $t_i$ is elementary over $k(t_1,\ldots,t_{i-1})$ for $i\in[n]$. We say that $f\in k$ {\em has an elementary integral over} $k$ if there exists an elementary extension $E$ of $k$ and $g\in E$ such that $Dg=f$. An {\em elementary function} is any elementary extension of $(\mathbb{C}(x),d/dx)$.
\end{defn}

\begin{theorem}(Theorem 5.5.2 in \cite{BronsteinBook}, Liouville's Theorem)
Let $K$ be a differential field with an algebraically closed constant field and $f\in K$. If there exists an elementary extension $E$ of $K$ and $g\in E$ such that $Dg=f$, then there are $v\in K, u_1,\ldots,u_n\in K^*$ and $c_1,\ldots,c_n\in \mbox{Const}(K)$, such that $$f=Dv+\sum\limits_{i=1}^n c_i\frac{D u_i}{u_i}.$$
\end{theorem}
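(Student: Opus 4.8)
The plan is to prove Liouville's Theorem by induction on the height of the elementary tower witnessing integrability, peeling off one generator at a time. By hypothesis there is a chain $K = F_0 \subseteq F_1 \subseteq \cdots \subseteq F_m = E$ with $F_{i} = F_{i-1}(\theta_i)$, each $\theta_i$ algebraic, a logarithm, or an exponential over $F_{i-1}$, and an element $g \in E$ with $g' = f \in K$. The trivial expression $f = g'$ (with no logarithmic terms) already has the required shape, but with $v = g \in E$ rather than in $K$. Everything then reduces to a single \emph{descent lemma}: if $F(t)$ is a simple elementary extension of a differential field $F$ with $C_{F(t)} = C$ algebraically closed, and an element $f \in F$ can be written as $f = v' + \sum_i c_i u_i'/u_i$ with $v, u_i \in F(t)$ and $c_i \in C$, then $f$ admits such an expression with $v \in F$ and $u_i \in F^*$. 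Applying this lemma to $t = \theta_m$ over $F = F_{m-1}$, then to $\theta_{m-1}$ over $F_{m-2}$, and so on down to $F_0 = K$, yields the theorem. First I would record the normalization (valid in the two transcendental cases) that, using the logarithmic-derivative identity, one may assume each $u_i$ is either an element of $F^*$ or a distinct monic irreducible of $F[t]$, at the cost of relabelling the constants $c_i \in C$.

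For the descent lemma I would split into three cases according to the type of $t$. When $t$ is a transcendental logarithm, $t' =: \eta \in F$, so $\deg_t(p') < \deg_t(p)$ for every monic irreducible $p$ of positive degree, whence $p \nmid p'$. A short computation shows that if $v$ had a pole of order $k \ge 1$ at such a $p$, then $v'$ would have a pole of order exactly $k+1$ there, which no simple pole coming from a term $c_i u_i'/u_i$ could cancel; since $f \in F$ has no $t$-poles, this forces $v \in F[t]$ and kills every log-term attached to a positive-degree irreducible. Writing $v = \sum_{j} v_j t^j$ and comparing coefficients of $t^j$ in $v' \in F$, the identity $(v_j t^j)' = (v_j' + j v_j \eta)t^j$ together with the transcendence of $t$ forces $v_j = 0$ for $j \ge 2$ and $v_1 \in C$; the surviving term $v_1 \eta = v_1 s'/s$ (where $\eta = s'/s$) becomes the new logarithmic contribution, giving the desired form over $F$. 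The exponential case $t'/t = b' \in F$ is parallel: the same pole analysis (now also checking the special prime $t$, where a nonzero Laurent tail would force $c_k/t^k \in C_{F(t)} = C$, which is impossible) confines $v$ to $F[t]$ and then to $F$, while the would-be logarithmic term $c\,t'/t = (cb)'$ is absorbed into $v'$.

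The case I expect to be the main obstacle is $t$ algebraic over $F$. Here I would pass to a normal closure $L$ of $F(t)/F$ and use that in characteristic zero the derivation extends uniquely to $L$ and commutes with every $F$-automorphism of $L$. Applying the $r := [F(t):F]$ distinct $F$-embeddings $\sigma_1, \ldots, \sigma_r$ of $F(t)$ into $L$ to the relation $f = v' + \sum_i c_i u_i'/u_i$, and using $\sigma_j(f) = f$, $\sigma_j(c_i) = c_i$, and $\sigma_j(u_i'/u_i) = (\sigma_j u_i)'/(\sigma_j u_i)$, summation yields $r f = \mathrm{Tr}(v)' + \sum_i c_i\, \mathrm{N}(u_i)'/\mathrm{N}(u_i)$, where the trace $\mathrm{Tr}(v) = \sum_j \sigma_j(v)$ and norm $\mathrm{N}(u_i) = \prod_j \sigma_j(u_i)$ lie in $F$ by separability. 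Dividing by $r$ gives the required expression over $F$ with constants $c_i/r \in C$. The delicate points to get right are the unique extension of the derivation and its commutation with the $\sigma_j$ (so that traces of derivatives are derivatives of traces, and norms interact correctly with logarithmic derivatives), and the bookkeeping ensuring the constant field never grows beyond $C$ as one descends; this last point is exactly where algebraic closedness of $C$ is used, guaranteeing $C_{F(t)} = C$ at every stage so that the coefficients $c_i$ remain genuine constants of $K$.
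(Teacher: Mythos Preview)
The paper does not prove this statement: Liouville's Theorem appears only in the appendix as a result quoted from \cite{BronsteinBook}, invoked as a black box in the proof of Theorem~\ref{TH:ele}. There is thus no proof in the paper to compare against. Your sketch is the standard Rosenlicht-style descent and is broadly sound, but two points deserve correction.

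First, a slip in the logarithmic case: with $t' = \eta \in F$ one has $(v_j t^j)' = v_j'\, t^j + j\, v_j\, \eta\, t^{\,j-1}$, not $(v_j' + j v_j \eta)t^j$; the identity you wrote is the exponential-case formula. With the correct expansion, the top coefficient of $v'$ gives $v_d' = 0$, the next gives $(v_{d-1} + d\, v_d\, t)' = 0$, and only then does transcendence of $t$ together with $C_{F(t)} = C_F$ force $d \le 1$. Your conclusion survives, but not via the line you wrote.

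Second, and more substantively, algebraic closedness of $C = C_K$ does \emph{not} guarantee $C_{F(t)} = C$ at every stage. For an algebraic generator this is true, but a transcendental logarithm $t$ with $t' \in F'$ (or an exponential $t$ with $t/w \in C_{F(t)}$ for some $w \in F^{*}$) yields $F(t) = F(c)$ for a \emph{transcendental} constant $c$, which no hypothesis on $C_K$ excludes. Your descent lemma then fails to apply, and the constants produced along the way need not land in $C_K$. The classical fix is an extra case in the descent: when $F(t) = F(c)$ for a new constant $c$, pick $c_0 \in C_F$ avoiding the finitely many zeros and poles of $v$ and the $u_i$ (possible since $C_F \supseteq C_K$ is infinite) and apply the differential specialization $c \mapsto c_0$ to push the Liouville expression down to $F$. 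Algebraic closedness is genuinely used for the algebraic step (no new algebraic constants) and, via infinitude, for this specialization---but not in the way your last sentence claims.
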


\begin{defn}(Chapter 2, Definition 3 in \cite{CLO}, Lexicographic Order)
Let $\alpha=(\alpha_1,\ldots,\alpha_n)$ and $\beta=(\beta_1,\ldots,\beta_n)$ be in $\mathbb{Z}_{\geq 0}^n$. We say $\alpha >_{lex}\beta$ if the leftmost nonzero entry of the vector difference $\alpha-\beta\in\mathbb{Z}^n$ is positive. We will write $x^{\alpha}>_{lex}x^{\beta}$ if $\alpha >_{lex}\beta$.
\end{defn}

\begin{cor}(Corollary 1' in \cite[Page 124]{ZariskiSamuel})
Let $K$ be a field and let $F=K(S)$ be a purely transcendental extension of $K$; here $S$ denotes a set of generators of $F/K$ which are algebraically independent over $K$. Let $x\rightarrow u_x$ be a mapping of $S$ into a field $L$ containing $F$. If $D$ is any derivation of $K$ with values in $L$, then there exists one and only one derviation $D'$ of $F$ extending $D$, such that $D(x)=u_x$ for all $x$ in $S$.
\end{cor}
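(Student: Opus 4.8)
The plan is to establish uniqueness first, since it follows directly from the derivation axioms, and then to construct the extension explicitly on the polynomial ring $K[S]$ before passing to its fraction field $F=K(S)$. For uniqueness, suppose $D'$ is any derivation of $F$ with values in $L$ satisfying $D'|_K=D$ and $D'(x)=u_x$ for all $x\in S$. Additivity together with the Leibniz rule forces the value of $D'$ on every monomial in the generators $S$ in terms of the $u_x$, hence on all of $K[S]$ once $D'|_K=D$ is fixed; the quotient rule $D'(p/q)=(q\,D'(p)-p\,D'(q))/q^2$ then determines $D'$ on every element of $F=\mathrm{Frac}(K[S])$. Thus at most one such $D'$ can exist.

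For existence, I would exploit the hypothesis that $S$ is algebraically independent over $K$, so that $K[S]$ is a genuine polynomial ring and the following two maps are well defined. First, let $\delta_0\colon K[S]\to L$ apply $D$ to coefficients, $\delta_0\!\left(\sum_\alpha c_\alpha X^\alpha\right)=\sum_\alpha D(c_\alpha)\,X^\alpha$; a direct expansion of a product shows $\delta_0$ is a derivation into $L$ with $\delta_0|_K=D$ and $\delta_0(x)=0$ for all $x\in S$. Second, for each $x\in S$ let $\partial_x$ be the formal partial derivative on $K[S]$, which is a $K$-derivation with $\partial_x(y)=\delta_{xy}$ and is well defined precisely because $S$ carries no algebraic relations. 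Since each element of $K[S]$ involves only finitely many generators, the combination $D':=\delta_0+\sum_{x\in S}u_x\,\partial_x$ is a finite sum on every input and therefore a well-defined derivation $K[S]\to L$; by construction $D'|_K=D$ and $D'(x)=\delta_0(x)+\sum_{y\in S}u_y\,\partial_y(x)=u_x$.

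It then remains to extend $D'$ from the domain $K[S]$ to its fraction field $F=K(S)$. I would define $D'(p/q):=(q\,D'(p)-p\,D'(q))/q^2$ and verify that this is independent of the chosen representative: applying the already-established derivation $D'$ on $K[S]$ to the relation $pq'=p'q$ yields exactly the identity needed to show the two formulas agree. Checking additivity and the Leibniz rule for this extension is the standard (if slightly tedious) verification that a derivation on an integral domain extends uniquely to its field of fractions. Combined with the uniqueness argument above, this produces one and only one derivation $D'$ of $F$ with values in $L$ extending $D$ and satisfying $D'(x)=u_x$.

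The main obstacle is conceptual rather than computational: the entire construction rests on the algebraic independence of $S$. This is what guarantees that the formal partials $\partial_x$ and the coefficient map $\delta_0$ are well defined with no hidden relations to reconcile, so that the defining formula for $D'$ on $K[S]$ is consistent. Were $S$ merely a generating set satisfying some algebraic relation, the proposed formula could assign conflicting values and existence would fail; thus the purely transcendental hypothesis must be invoked carefully exactly at the point where $\delta_0$ and the $\partial_x$ are shown to be well defined.
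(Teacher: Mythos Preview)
Your proof is correct and is essentially the standard textbook argument: uniqueness from the Leibniz and quotient rules, existence by first defining the derivation on the polynomial ring $K[S]$ as the sum of the coefficient derivation $\delta_0$ and the formal partials weighted by the prescribed values $u_x$, then extending to the fraction field via the quotient rule. The emphasis you place on algebraic independence as the hypothesis that makes $\delta_0$ and the $\partial_x$ well defined is exactly right.

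Note, however, that the paper does not supply its own proof of this statement. It appears only in the appendix as a quoted result from Zariski and Samuel, listed for the reviewers' convenience; in the body of the paper it is merely invoked (in the proof of Theorem~\ref{TH:wg}) to guarantee that the tower $K_0(\bar u)$ becomes a differential field once the derivatives $u_k'=\phi(b_k)$ are prescribed. So there is no in-paper argument to compare against; your write-up is a faithful reconstruction of the classical proof that Zariski--Samuel give.
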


\end{document}